\newtheorem{lemma}{Lemma}
\newtheorem{theorem}{Theorem}
\newtheorem{definition}{Definition}
\newcommand{\norm}[1]{\left\lVert#1\right\rVert}
\newcommand{\floor}[1]{\lfloor #1 \rfloor}
\newcommand{\ceil}[1]{\lceil #1 \rceil}
\renewcommand{\mark}{\text{\upshape{mark}}}
\newcommand{\tile}{\text{\upshape{tile}}}
\newcommand{\uu}{\text{\upshape{u}}}
\newcommand{\col}{\text{\upshape{col}}}
\newcommand{\row}{\text{\upshape{row}}}
\newcommand{\site}{\text{\upshape{site}}}
\newcommand{\spec}{\text{\upshape{spec}}}
\newcommand{\dense}{\text{\upshape{dense}}}
\newcommand{\trivial}{\text{\upshape{trivial}}}
\newcommand{\tot}{\text{\upshape{tot}}}
\newcommand{\inn}{\text{\upshape{in}}}
\newcommand{\out}{\text{\upshape{out}}}
\newcommand{\prop}{\text{\upshape{prop}}}
\newcommand{\comp}{\text{\upshape{comp}}}
\newcommand{\anc}{\text{\upshape{anc}}}
\newcommand{\QTM}{\text{\upshape{QTM}}}
\title{Chaitin Phase Transition}
\author[1]{James Purcell \thanks{james.purcell.22@ucl.ac.uk}}
\author[2]{Zhi Li \thanks{zli@perimeterinstitute.ca}}
\author[1,3]{Toby Cubitt \thanks{t.cubitt@ucl.ac.uk}}
\affil[1]{University College London}
\affil[2]{Perimeter Institute for Theoretical Physics}
\affil[3]{Phasecraft Ltd.}
\date{} 
\begin{document}
	\maketitle
	
	\begin{abstract}
		We construct a family of Hamiltonians whose phase diagram is guaranteed to have a single phase transition, yet the location of this phase transition is uncomputable. 
		The Hamiltonians $H(\phi)$ describe qudits on a two-dimensional square lattice with translationally invariant, nearest-neighbour interactions tuned by a continuous parameter $\phi\in(0,1]$.
		For all $\phi\in(0,1]$, $H(\phi)$ is in one of two phases, one a gapless phase, the other a gapped phase. 
		The phase transition occurs when $\phi$ equals the Chaitin's constant $\Omega$, a well-defined real number that encodes the Halting problem, and hence is uncomputable for Turing machines and undecidable for any consistent recursive axiomatization of mathematics.
		Our result implies that no general algorithm exists to determine the phase diagrams even under the promise that the phase diagram is exceedingly simple, 
		and illustrates how uncomputable numbers may manifest in physical systems.
	\end{abstract}
	
	\section{Introduction}
	
	Understanding the macroscopic behaviour of many-body systems from their microscopic origin has long been a key focus of condensed matter physics.
	While the microscopic description usually takes the form of Hamiltonians, the collective, macroscopic properties that emerge from the microscopic interactions are characterized by phases and phase transitions in the thermodynamic limit.

	The study of phases and phase transitions from Hamiltonians has been quite successful.
	Certain systems may permit a complete analytical understanding, such as the quantum Ising model \cite{lieb1961two}, spin-1/2 Heisenberg model \cite{Heisenberg1928}, AKLT model \cite{AKLT}, Kitaev's toric code \cite{kitaev2003fault} and honeycomb model \cite{kitaev2006anyons}.
	Numerically, general methods for computing macroscopic properties of finite systems exist, such as exact diagonalization, Monte Carlo \cite{metropolis1953equation}, density matrix renormalization group algorithms \cite{dmrg}, and tensor networks \cite{fannes1992finitely}. 
	Thermodynamic limit properties, such as gapped v.s. gapless, are extrapolated by applying the algorithm of choice on large enough systems.
	For certain families of frustration-free Hamiltonians, such extrapolation can be placed on rigorous grounds \cite{knabe1988energy,bravyi-gosset, gosset-mozgunov, lemm-mozgunov}.
	Furthermore, for systems depending on external parameters, a phase diagram can often be obtained by repeating the above procedure whilst varying the free parameter.

	On the other hand, it has been recognized that there must be some fundamental difficulties in calculating the phase diagram of a given Hamiltonian in general.
	Stemming from Kitaev's observation that computing the ground state energy of a many-body Hamiltonian is QMA-complete \cite{kitaev}, the field of Hamiltonian complexity has been established as a critical area of research \cite{oliveira2005complexity,kempe2006complexity,aharonov2009power,GI09}.
	When passing to the thermodynamic limit of infinitely large system sizes, computability theory is often more relevant. 
	Cubitt, Perez-Garcia, and Wolf \cite{CPW,CPWfull} proved that the spectral gap problem is in general undecidable, 
	by constructing a family of 2D lattice Hamiltonian for which determining whether an instance is gapped or not is at least as difficult as the Halting problem, a prototypical undecidable problem in computer science \cite{turing1936computable}.
	Such undecidability also persists for 1D Hamiltonians \cite{Bausch_2020}.
	While the above Hamiltonians depend on discrete parameters, a subsequent work \cite{BCW} constructed a family of Hamiltonian $H(\phi)$ that depends continuously on a real parameter $\phi\in\mathbb{R}$. The resulting phase diagram is uncomputable, fractal-like, and has infinitely many phase transitions.

	An immediate question is how to reconcile the empirical successes with the fundamental hardness. Specifically, the Hamiltonians discussed in \cite{BCW} exhibit a highly complex phase diagram, which appears very different from what is observed in classic condensed matter models.	
	Can uncomputability appear in more ``physical" settings?
	More precisely, in the work, we ask
	\begin{quote}
		\textit{Can a phase diagram that is guaranteed to have the simplest possible form (i.e., a single phase transition) be uncomputable?}
	\end{quote}

	We answer this question affirmatively by constructing a one-parameter family of Hamiltonians that undergo a single phase transition at a well-defined, yet uncomputable point. 
	Interestingly, our construction ensures that the phase transition occurs at the \textit{Chaitin's constant} $\Omega_M$ \cite{chaitin} with respect to a universal Turing machine (UTM). 
	
	The Chaitin's constant $\Omega_M$ is an unusual example of a \textit{definable} yet \emph{uncomputable} number. 
	It may be interpreted as the probability that a UTM halts on randomly chosen inputs, and can be rigorously defined via a convergent infinite series (see Definition \ref{chaitin const})).
	It compactly encodes the whole Halting problem and is uncomputable: no Turing machine can compute its value to an arbitrary number of bits.
	In fact, its binary expansion is random in a strict sense \cite{chaitin,Downey_Hirschfeldt_2010} and is fundamentally unpredictable.
	Relatedly, it is undecidable in formal mathematical systems: any consistent recursive axiomatization of mathematics, such as the ZFC axioms of set theory---the most widely accepted foundation of mathematics---can determine only finitely many bits of $\Omega_M$ \cite{chaitin1992information}. 
	In extreme cases, one may be limited to knowing very little about it.
	For example \cite{solovay1999version}, there exists a universal Turing machine for which every bit of its Chaitin's constant is independent of the ZFC axioms.

	Our Hamiltonian, therefore, has an arguably simple phase diagram, similar to those found in ``real-world" physics. However, there is a fundamental barrier to computing these diagrams with arbitrarily high precision. 
	Our construction vividly illustrates how uncomputable numbers may manifest in physical systems.

	\section{Main Result}

	Our main result is an explicit construction of a family of 2D lattice Hamiltonians $\{H^{(L)}(\phi)\}$ indexed by the system size $L$ and a periodic tuning parameter $\phi\in\mathbb{R}$---i.e. $H^{(L)}(\phi)=H^{(L)}(\phi+1)$---such that in the thermodynamic limit the system is:
	\begin{itemize}
		\item gapless, disordered, critical if $\phi\in(0,\Omega_M)$;
		\item gapped, ordered, trivial if $\phi\in[\Omega_M,1]$.
	\end{itemize}
	Here, $\Omega_M\in(0,1)$ is the Chaitin's constant for a universal prefix-free (see Sec.~\ref{sec-Chaitin}) Turing machine $M$, and is hence uncomputable.
	The phase diagram is illustrated in Fig.~\ref{fig1}.
	\begin{figure}[H]
		\centering
		\begin{tikzpicture}
			\draw[-stealth] (0,0) -- (6.1,0) node[right]{$\phi$};
			
			\filldraw[red!40, thin] (0.05,0.05) rectangle (4,0.9);
			\filldraw[green!40, thin] (4,0.05) rectangle (6,0.9);
			
			\draw[thick] (0,0) node[below]{$0$};
			\draw[thick] (4.2,0) node[below]{$\Omega_M$};
			\draw[] (4,-0.35) node[below]{\text{(Chaitin)}};
			\draw[thick] (6,0) node[below]{$1$};
			
			\filldraw[red!40, thin] (-3,0.5) rectangle (-2.5,0.7);
			\node at (-1.5, 0.6) {gapless};
			\filldraw[green!40, thin] (-3,0.2) rectangle (-2.5,0.4);
			\node at (-1.5, 0.3) {gapped};
			
		\end{tikzpicture}
		\caption{The diagram plotted against $\phi\in\mathbb{R}$. The Hamiltonian is periodic and we only plot one fundamental period. 
			In the red (green) region the system is gapless (gapped), disordered (ordered), and critical (trivial). 
			$\Omega_M$ is the Chaitin's constant of a universal Turing machine $M$ and is uncomputable.
		}\label{fig1}
	\end{figure}
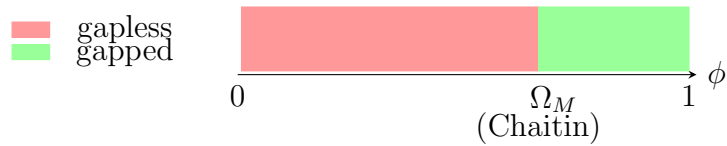
	
	Below, we present the precise result.
	We consider qudits living on a 2D square lattice $\Lambda$ of size $L\times L$. 
	We shall construct Hamiltonians $H^{(L)}(\phi)$ in the following form:
	\begin{equation} \label{hlambda}
		H^{(L)}(\phi):=\sum_{i=1}^{L} \sum_{j=1}^{L} h^{\site}_{(i,j)}
		+\sum_{i=1}^{L} \sum_{j=1}^{L-1} h^{\col}_{(i,j),(i,j+1)}
		+\sum_{i=1}^{L-1} \sum_{j=1}^{L} h^{\row}_{(i,j),(i+1,j)}(\phi).
	\end{equation}
	Here $h^{\site}$ acts on site, and $h^{\row}$ (and $h^{\col}$) only act on two horizontally (and vertically) adjacent sites. 
	All of them are independent of the location, hence $H^{(L)}$ is nearest-neighbour and translationally invariant. 
	The precise construction of the Hamiltonian is provided by the following:		
	\begin{theorem}\label{theorem 1}
		For any prefix-free	Turing machine $M$, we can explicitly construct a rational number $\beta\in(0,1]$, a dimension $d \in \mathbb{N}$, two $d\times d$ matrices $h$ and $h'$, five $d^2 \times d^2$ matrices $a,a',b,c,c'$, and define $H^{(L)}$ as in Eq. (\ref{hlambda}) by setting:
		\begin{align}
			h^{\site}&= h + \beta h',\\
			h^{\col}&= c + \beta c',\\
			h^{\row}(\phi) &= a + \beta \left( a' + e^{i \pi \phi} b + e^{-i \pi \phi} b^\dagger \right),
		\end{align}
		such that	
		\begin{itemize}
			\item matrix entries of $a,a',b,c,c',h, h'$ are in $\mathbb{Z} + \frac{1}{\sqrt{2}} \mathbb{Z}$;
			\item $||h^{\row}(\phi)||\leq 2$ for all $\phi$, $||h^{\col}||\leq 2$, $||h^{\site}||\leq 2$;
			\item for $\phi \text{\upshape{\text{ mod }}} 1 \in (0, \Omega_M)$, $H^{\Lambda}(\phi)$ is gapless with a disordered and critical ground state in the thermodynamic limit;
			\item for $\phi \text{\upshape{\text{ mod }}} 1 \in [\Omega_M, 1]$, $H^{\Lambda}(\phi)$ is gapped with an ordered and trivial ground state in the thermodynamic limit.
		\end{itemize}
	\end{theorem}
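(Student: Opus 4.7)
The plan is to adapt the Cubitt-Perez-Garcia-Wolf (CPW) and Bausch-Cubitt-Watson (BCW) framework, in which a translationally-invariant 2D nearest-neighbour Hamiltonian simulates a quantum Turing machine (QTM) whose input is the binary expansion of $\phi$, extracted from the phase factors $e^{i\pi\phi}$ via a phase-estimation routine. The macroscopic phase of $H^{(L)}(\phi)$ in the thermodynamic limit is dictated by the halting behaviour of this QTM: a critical ``dense spectrum'' component contributes a gapless disordered ground state, while a halting-triggered coupling instead selects a trivial product state. The key novelty compared with BCW, whose phase diagram is fractal, is to choose a computation whose halting pattern as a function of $\phi$ is \emph{monotone}, with a single threshold precisely at $\Omega_M$.

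Given the prefix-free UTM $M$, the Turing machine $T_M$ to encode will operate as follows: it reads an increasing prefix $\phi^{(n)} = 0.\phi_1\ldots\phi_n$ of the binary expansion of $\phi$ and, in parallel, dovetails over all programs $p$, simulating $M(p)$ and maintaining the running lower bound $\Omega^{(t)} = \sum_{p \in H_t} 2^{-|p|}$, where $H_t$ is the set of programs found to halt by stage $t$. The machine $T_M$ halts as soon as it certifies $\Omega^{(t)} > \phi^{(n)} + 2^{-n}$ for some pair $n,t$. Since $\Omega^{(t)} \nearrow \Omega_M$ and $\phi \in [\phi^{(n)}, \phi^{(n)} + 2^{-n})$, this inequality is eventually achievable iff $\phi < \Omega_M$; hence $T_M$ halts on $\phi$ iff $\phi < \Omega_M$, the monotone halting property that yields a single phase transition at exactly $\Omega_M$.

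The next step is to embed $T_M$ into the BCW-style Hamiltonian assembly. The seven matrices split into a dense-spectrum part (carrying $h,c,a$) and a perturbative QTM-encoding part scaled by $\beta$ (carrying $h',c',a',b$), with the phases $e^{\pm i\pi\phi}$ threading the QTM's input register. The halting-trigger coupling is oriented so that halting of $T_M$ stabilises the critical ground state of the dense-spectrum Hamiltonian (yielding gapless, disordered, critical behaviour), while non-halting leaves a trivial product state as ground state (yielding gapped, ordered, trivial behaviour). This reverses the CPW sign convention and is implemented by flipping the sign of the halting penalty. The matrix-entry constraint $\mathbb{Z}+\tfrac{1}{\sqrt{2}}\mathbb{Z}$ and norm bound $\le 2$ follow from the standard BCW engineering of gate sets together with a global rescaling absorbed into rational $\beta$; the periodicity $H(\phi+1)=H(\phi)$ reduces to a gauge transformation absorbing the sign $e^{i\pi}=-1$ into a local unitary on the QTM input register.

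The hardest part will not be the computability reduction (which is the elementary semi-decision procedure above) but rather proving the refined physical characterization of the two phases: showing that the gapless side is genuinely critical with a disordered ground state, that the gapped side is adiabatically connected to a product state, and that exactly one transition separates them. I expect this to require choosing the dense-spectrum Hamiltonian to be a known critical 2D model whose ground-state correlations are controlled, together with a careful perturbative analysis in small rational $\beta$ showing that the weak QTM-induced perturbation preserves criticality in the halting sector and the product structure in the non-halting sector. The single-transition conclusion then follows immediately from the monotonicity of $T_M$'s halting behaviour established in the previous step.
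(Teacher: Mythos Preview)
Your high-level identification of the key algorithmic idea is correct: encode a QTM that approximates $\Omega_M$ from below and compares with the phase-estimated $\phi$, halting iff $\phi<\Omega_M$; this monotonicity is exactly what collapses the BCW fractal diagram to a single transition. However, the mechanism you propose for converting the halting dichotomy into a macroscopic phase transition has two genuine gaps.

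First, you omit the \emph{tiling\,\&\,marker amplification}. The Feynman--Kitaev encoding of the QTM produces a ground-state energy $\lambda_0(H_{\comp})$ that distinguishes halting from non-halting only at the scale $\Theta(1/T^2)$ with $T$ exponential in the chain length; this separation vanishes in the thermodynamic limit and cannot by itself drive a phase transition. The paper couples $H_{\comp}$ to a classical tiling Hamiltonian whose zero-energy states partition the lattice into squares of variable size $s$, together with a marker Hamiltonian contributing a tunable negative bonus $\sim -4^{-C(s+\lceil s^{1/8}\rceil)}$ per square. With $C$ chosen so that this bonus lies strictly between the halting and non-halting $H_{\comp}$ energy scales, each square has negative energy in the halting case and positive energy otherwise, yielding $\lambda_0(H_{\uu})\to-\infty$ versus $\lambda_0(H_{\uu})\ge 0$. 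Without this amplification step there is no extensive energy separation and your proposed route fails.

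Second, the final assembly is \emph{not} perturbative, and $\beta$ is not a small coupling in the sense you describe. The local Hilbert space is a genuine direct sum $(\mathcal H_{\uu}\otimes\mathcal H_{\dense})\oplus\mathcal H_{\trivial}$; the unprimed matrices $h,a,c$ carry $H_{\trivial}$ together with cross-sector penalties forcing the ground state entirely into one summand, while the $\beta$-scaled primed matrices carry \emph{both} $H_{\uu}$ and $H_{\dense}$ (so your assignment of the dense-spectrum Hamiltonian to the unprimed part is reversed). The ground state is then \emph{exactly} the product state of $H_{\trivial}$ when $\lambda_0(H_{\uu})\ge 0$, and \emph{exactly} contains the critical $H_{\dense}$ ground state when $\lambda_0(H_{\uu})\to-\infty$; no perturbative stability analysis is needed, and the ``hardest part'' you anticipate dissolves. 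The rational $\beta$ is merely a global rescaling to enforce the norm bounds $\le 2$. A further technical point you gloss over: because QPE outputs a superposition with nonzero amplitude on bad estimates of $\phi$, one must show the non-halting case is not contaminated; the paper handles this via an explicit rounding step, a careful choice of how many bits $m$ to trust, and by forcing $W'$ to loop on the all-zeros input to absorb the $\phi\approx 1$ wraparound.
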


	In condensed matter physics, quantum phase transitions are often characterized by spectral gap closing, non-analytic changes in certain order parameters, and/or different infrared properties of the ground states. 
	As stated in the theorem, our construction satisfies all three characterizations.  
	
	Denote $\lambda_0(H^{(L)})$ and $\lambda_1(H^{(L)})$ as the minimal and the second minimal eigenvalue respectively.
	We say a family of Hamiltonians $\{H^{(L)}\}$ is \emph{gapped} if there is a constant lower bound on the spectral gap: $\Delta(H^{(L)})=\lambda_1(H^{(L)})-\lambda_0(H^{(L)})\geq \gamma>0$ (hence $\lambda_0(H^{(L)})$ is non-degenerate; in our model, $\gamma=1$) for all sufficiently large $L$.
	A family of Hamiltonians is \emph{gapless} if the system has a ``continuous spectrum" above the ground state in the thermodynamic limit:
	if there is a constant $c > 0$ such that for all $\varepsilon > 0$, the union of $\varepsilon$-balls centred at spectra of $H^{(L)}$ can cover the interval $[\lambda_0(H^{(L)}),\lambda_0(H^{(L)})+c]$ for all sufficiently large $L$
	\footnote{Note that gapped is not the negation of gapless here. We adopt such strong definitions to exclude ambiguous cases such as systems with ground state degeneracies. Similar comments hold for order v.s. disorder and trivial v.s. critical.}.

	Our phase transition can also be detected by a local order parameter $\mathcal{O}=L^{-2}\sum_{i\in\Lambda}\mathcal{O}_i$, where each $\mathcal{O}_i$ is an on-site operator that is diagonal and has entries 0, $\pm 1$ in a computational basis. 
	In the gapless phase, 
	the ground state $\ket{\Psi}$ satisfies $\braket{\mathcal{O}}_\Psi=0$ and is hence \emph{disordered}; 
	while in the gapped phase it satisfies $\braket{\mathcal{O}}_\Psi=1$ and is hence \emph{ordered}.
	Moreover, the ground state in the gapless phase is \emph{critical}: 
	certain local operator shows algebraic decay of correlation;
	whereas the ground state in the gapped phase is \emph{trivial}: it is a product state.
	
	We note that everything in the constructed Hamiltonian is computable except for the tuning parameter $\phi$ which varies continuously. Indeed, they are either $\pi$ or algebraic numbers (rational or $\frac{\mathbb{Z}}{\sqrt{2}}$). 
	This is essential to ensure the non-triviality of our theorem. 
	Otherwise, consider the 1D quantum Ising model and shift the coupling parameter by an uncomputable number---the resulting Hamiltonian would be a trivial example for a phase transition at an uncomputable location.

	\section{Construction of the Hamiltonian}
	
	At a high level, our construction consists of two steps. 
	Loosely speaking, we first construct a Turing machine with oracle access to a real number $\phi$, such that it halts if and only if $\phi<\Omega_M$; 
	then we adopt a Quantum-Turing-Machine-to-Hamiltonian construction from \cite{BCW}, such that the halting and non-halting properties correspond to two distinct phases.
	The resulting Hamiltonian will have the desired phase diagram.

	\subsection{Chaitin's Constant and its Approximation}\label{sec-Chaitin}

	The Chaitin constant $\Omega_M$ of a Turing machine $M$ is a real number defined as follows:
	\begin{definition} \label{chaitin const}
		\begin{equation}\label{eq:chaitin}
			\Omega_M := \sum_{x: M(x) \text{~\upshape{halts}} } 2^{-|x|}.
		\end{equation}
	\end{definition}
	Here, each $x\in \{0,1\}^*$ is a finite bit string representing an input for the Turing machine $M$; 	$|x|$ is the length of $x$.
	The summation runs over all inputs such that $M$ halts.
	For technical reasons, it is assumed that $M$ is \emph{prefix-free}: if $M$ halts on both $x$ and $y$, then $x$ cannot be an initial segment of $y$ or vice versa.
	Under a suitable measure, $2^{-|x|}$ equals the probability for a random infinite bit string to start with $x$.
	Therefore, Chaitin's constant $\Omega_M$ may be interpreted as a halting probability: the probability for a prefix-free machine $M$ halts on a random input bit string. 
	Indeed, the series Eq.~(\ref{eq:chaitin}) converges due to the Kraft–McMillan inequality, and $\Omega_M\in(0,1)$.
	In the following, we sometimes omit the subscript $M$.

	The Chaitin's constant $\Omega$ for a universal Turing machine $M$ encodes the whole Halting problem---its first $n$ bits of $\Omega$ solve the Halting problem for all inputs of size $\leq n$. 
	It is therefore an uncomputable number \cite{chaitin}.
	Nevertheless, it can be computably approximated from below: there exists a computable sequence $\Omega_1\leq \Omega_2\leq \cdots$ such that $\lim_{s\to\infty}\Omega_s=\Omega$. 
	To do so, we simply simulate the computation for all inputs $x$ and add $2^{-|x|}$ to our approximation whenever we find $M(x)$ halts. 
	The idea is implemented in the the following algorithm that computes $\Omega_s$ from $s$.
	\begin{algorithm}[H]
		\caption{\textsc{Chaitin-approx\textsubscript{M}} 
		}
		\begin{algorithmic}[1]
			\State \textbf{input:} $s \in \mathbb{N}$
			\State set $\Omega_s=0$
			\For{$i=1$ to $s$}
			\State run $M$ for input $x_i$ for $s$ steps
			\If{$M(x_i)$ halts within $s$ steps}
			\State $\Omega_s \leftarrow 2^{-|x_i|}+\Omega_s$
			\EndIf
			\EndFor
			\State \textbf{output:} $\Omega_s$
		\end{algorithmic}
		\label{algo-chaitin-approx}
	\end{algorithm}
	Here we have fixed a way to list all bit strings $x\in\{0,1\}^*$. Since for each $s$ we only simulate $M$ on finitely many strings for finitely many steps, it is clear that $\Omega_s\leq \Omega_{s+1} < \Omega$ for all $s$.
	Moreover, for any $x$ such that $M(x)$ does halt, the computation $M(x)$ will be fully simulated for sufficient large $s$, hence 
	$\lim_{s\to\infty} \Omega_s = \Omega$.
	Note that there is no contradiction between the one-side approximability and the uncomputability. 
	Uncomputability of $\Omega$ only implies that no computable bounds on the quality of the approximation can exist, and that $\Omega$ cannot also be computably approximated from above.

	Now we describe an algorithm parametrized by a real number $\phi$ such that it halts if and only if $\phi<\Omega$.
	To do so, we enumerate the approximation-from-below sequence $\Omega_s$ and compare each $\Omega_s$ to $\phi$. 
	If we ever find $\phi<\Omega_s$ we halt; otherwise we continue.
	The idea is implemented in the algorithm \ref{witness-real}, denoted as as $W(\phi)$. 
	\begin{algorithm}[H] 
		\caption{\textsc{$W$ as a witness for $\phi<\Omega$} }
		\begin{algorithmic}[1]
			\State set $s=1$
			\While{$s>0$}
			\State $\Omega_s \gets \textsc{Chaitin-approx\textsubscript{M}}(s)$
			\If{$\phi < \Omega_s$}
			\State halt
			\EndIf
			\State $s \gets s+1$
			\EndWhile
		\end{algorithmic}
		\label{witness-real}
	\end{algorithm}
	Here, we have assumed oracle access to the full binary expansion of $\phi$.
	This enables the checking of whether $\phi<\Omega_s$ since the binary expansion of each $\Omega_s$ is finite and computable\footnote{As a convention, the binary expansion of $x=\frac{p}{2^q}$ where $p, q\in\mathbb{Z}$ is finite rather than ending in repeating 1s.}.
	However such an oracle is unrealistic and is given here for motivation only; we will modify $W$ to an algorithm that can actually be implemented in Sec.~\ref{sec-QTM}.

	The following theorem naturally follows from the construction, and shows that 	$W(\phi)$ has the desired ``phase transition" property: 
	as $\phi$ varies from $0$ to $1$, $W(\phi)$ switches from halting to non-halting exactly once, at $\Omega$. 
	\begin{theorem} \label{W halts}
		$W(\phi)$ halts in finite time if and only if $\phi < \Omega$.
	\end{theorem}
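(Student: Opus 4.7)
The proof should be a short, direct consequence of the construction of $W$ together with the monotone convergence property of the approximation sequence $\Omega_1 \leq \Omega_2 \leq \cdots$. The plan is to verify each direction of the biconditional separately, using only the facts already established about \textsc{Chaitin-approx}$_M$.

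For the ``if'' direction, suppose $\phi < \Omega$. Since $\Omega_s$ is a computable, non-decreasing sequence with $\lim_{s\to\infty} \Omega_s = \Omega$, the gap $\Omega - \phi > 0$ guarantees that there exists some $s_0 \in \mathbb{N}$ with $\Omega_{s_0} > \phi$. At iteration $s_0$ of the while-loop, $W(\phi)$ computes $\Omega_{s_0}$, the comparison $\phi < \Omega_{s_0}$ succeeds, and $W(\phi)$ halts. Since $s_0$ is finite, this occurs in finite time.

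For the ``only if'' direction, suppose $W(\phi)$ halts in finite time. Then it must exit via the halt instruction inside the if-block, so there exists some iteration $s$ at which $\phi < \Omega_s$. Since $\Omega_s \leq \Omega$ (strictly $\Omega_s < \Omega$, as already noted following the definition of the approximation algorithm), we conclude $\phi < \Omega$.

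There is essentially no obstacle here: the statement is really just a restatement of the loop invariant, and the only substantive ingredient is the monotone convergence $\Omega_s \nearrow \Omega$, which is itself immediate from the construction of \textsc{Chaitin-approx}$_M$ and has already been argued in the paragraph preceding the theorem. The proof will therefore be only a few lines long; no new machinery or subtle estimate is needed.
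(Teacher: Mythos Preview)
Your proposal is correct and follows exactly the same route as the paper's proof: both reduce the claim to the observation that $W(\phi)$ halts iff $\phi<\Omega_s$ for some $s$, and then invoke the monotone convergence $\Omega_s\nearrow\Omega$. The paper's version is simply a compressed one-sentence form of your two-direction argument.
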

	\begin{proof}
		$W(\phi)$ halts if any only if there exists $s$ such that $\phi<\Omega_s$. Since the series $\Omega_s$ monotonically approximates $\Omega$, this happens if and only if $\phi < \Omega$.
	\end{proof}

	\subsection{The Quantum Turing Machine}\label{sec-QTM}
	Our next task is to construct a Hamiltonian whose low energy properties closely follows the computation $W$.
	To do so, we would like to implement $W$ as a quantum Turing machine (QTM), and then map it to a Hamiltonian via constructions in the field of Hamiltonian complexity.

	In $W$, we have assumed some access to $\phi$. 
	We will need to make this concrete.
	In our QTM, $\phi$ will be encoded into a unitary $U_\phi=\begin{pmatrix}
		1 & 0 \\
		0 & e^{i\pi\phi}
	\end{pmatrix}$, and will be extracted and digitized by quantum phase estimation (QPE).  	
	Note however that the QPE will not be exact for generic $\phi$ for two reasons. 
	First, the number of available qubits may not be enough to express $\phi$ in full (e.g., if $\phi$ is irrational); 
	second, some gates must be approximated since the inverse quantum Fourier transform in QPE requires length-dependent gates but our construction will require a finite gate set as we want our Hamiltonian to have bounded local dimension.
	Due to these QPE errors, the output quantum state will be a superposition of bit strings with amplitudes that are largest on bit strings that approximate $\phi$ well (see Appendix \ref{appendix:input extraction}) \cite{Cleve}. 
	
	To better control the amplitude on approximations to $\phi$ that are accurate to a desired precision, we will perform QPE to a large number of bits $n$ and then round every component in the superposition to a smaller number of bits $m$ using a rounding Turing machine $R$ (Appendix \ref{appendix:input extraction}). 
	The magnitude of $n$ will be constrained by the number of available qudits that the QPE QTM has access to and the precision $m$ will be $n$-dependent and will be specified later.
	
	Next, the output state is fed into a quantum Turing machine that executes the following classical algorithm, denoted by $W'$, a slight variation of $W$.
	The input is a number with a binary expansion of finite but unbounded length $n$, denoted as $0.\bar\phi$ where $\bar\phi\in\{0,1\}^n$, representing a possible approximation of the true value of $\phi\in[0,1)$; and the same-as-before integer $m$ that represents the number of bits that $0.\bar\phi$ has been rounded to.
	\begin{algorithm}[H]
		\caption{\textsc{$W'$ as a witness for $\phi<\Omega$} }
		\begin{algorithmic}[1]
			\State \textbf{input:} $\bar\phi \in \{0,1\}^n, m\in \mathbb{N}$
			\State $\Omega_m \gets \textsc{Chaitin-approx\textsubscript{M}}(m)$
			\If{$0<(0.\bar{\phi}) \upharpoonright m < \Omega_m\upharpoonright m$}
			\State halt
			\Else 
			\State loop forever
			\EndIf
		\end{algorithmic}
		\label{witness-rational}
	\end{algorithm}
	Here, $x\upharpoonright s$ denotes the truncation of $x\in(0,1)$ to the first $s$ bits of its binary expansion: $x\upharpoonright s=\frac{1}{2^s}\lfloor2^sx\rfloor$.
	The reason behind the truncation, as well as the looping in case of $(0.\bar{\phi}) \upharpoonright m=0$, will be clear in Sec. \ref{sec-error}.
	
	In short, our QTM is the dovetail of QPE, $R$, and $W'$.
	It is this three-step QTM that will be mapped into a Hamiltonian. 
	Since the QPE step requires the $U_\phi$ gate, the resulting Hamiltonian will also depend on $\phi$.

	\subsection{Error Analysis}\label{sec-error}
	
	It is crucial to make sure that $W'$ also exhibits a sharp ``transition" at $\phi=\Omega$ in a certain sense and that is not undermined by the QPE errors.
	
	As discussed above, the QPE will output a superposition of bit strings, each component of which will be rounded and inputted to $W'$,
	which effectively only reads $\tilde{\phi}_m=(0.\bar{\phi}) \upharpoonright m$.
	Here, we only consider the typical cases where $\tilde{\phi}_m$ is one of the best approximations of $\phi$: $\tilde{\phi}_m=\frac{1}{2^m}\lfloor2^m \phi \rfloor=\phi\upharpoonright m$ or $\tilde{\phi}_m=\frac{1}{2^m}\lceil2^m \phi \rceil$.
	The probability of atypical events (i.e. a poor approximation) will be bounded and be shown to have little contribution later.
	But even in typical cases, when comparing $\tilde{\phi}_m$ and $\phi$ we see that both underestimation and overestimation are possible.

	In the case of $\phi\in(0,\Omega)$, although $\tilde{\phi}_m$ may overestimate $\phi$ enough that $\tilde{\phi}_m>\Omega$, this will only be for small $m$. 
	For sufficiently large $m$, we will have $\tilde{\phi}_m\leq \phi+2^{-m}<\Omega$.
	And with a perhaps even larger $m$, we have $\tilde{\phi}_m<\Omega_m \upharpoonright m$ and hence $W'(\tilde{\phi}_m)$ will halt.
	In the case of $\phi\in[\Omega,1)$, although $\tilde{\phi}_m$ may underestimate $\phi$, 
	it always holds that $\tilde{\phi}_m\geq \phi\upharpoonright m\geq \Omega\upharpoonright m\geq \Omega_m \upharpoonright m$. 
	Therefore, $W'(\tilde{\phi}_m)$ will not halt for any $m$.

	In the case of $\phi \in [\Omega,1]$, for $\phi$ sufficiently close to $1$ from below, a good estimate to $\phi$ may be close to $0$ from above due to the periodicity of QPE.
	Such an estimate may cause $W'$ to incorrectly conclude that $\phi < \Omega$.
	This issue can be resolved by noticing that there is no $\Omega$ for which we expect $W'(1)$ to halt as $\Omega\leq 1$. 
	Therefore we can require $W'$ to effectively interpret $0^m$ as $1$ by manually setting it to loop indefinitely for such an input.
	And so the periodicity cannot cause $QPE+R+W'$ to halt for $\phi$ sufficiently close to $1$.
	
	Overall therefore, when given a good estimate of $\phi$ as input, $W'$ will never halt for $\phi\in[\Omega,1]$, whereas for $\phi\in(0,\Omega)$, $W'$ will halt if given enough workspace.
	
	The above observations are summarized in the following theorem (see Appendix \ref{sec:thm:W'transition} for a detailed proof).
	\begin{theorem}\label{thm: W'transition}
		Fix $\phi\in(0,1]$ and consider sequences $\{\tilde{\phi}_m\}$, where $\tilde{\phi}_m=\frac{1}{2^m}\lfloor2^m \phi \rfloor$ or $\frac{1}{2^m}\lceil 2^m \phi \rceil$. We have:
		\begin{itemize}
			\item if $\phi<\Omega$, then for large enough $m$,  $0<\tilde{\phi}_m<\Omega_m\upharpoonright m$ and hence $W'(\tilde{\phi}_m)$ halts;
			\item if $\phi\geq\Omega$, then for all $m$, $\tilde{\phi}_m=0$ or $\tilde{\phi}_m>\Omega_m\upharpoonright m$ and hence $W'(\tilde{\phi}_m)$ does not halt.
		\end{itemize}
	\end{theorem}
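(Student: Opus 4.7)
The plan is to prove the two bullets separately, exploiting the monotone convergence $\Omega_m \nearrow \Omega$ in the first case and the ordering $\phi \geq \Omega \geq \Omega_m$ in the second. For the first bullet ($\phi < \Omega$), I would first establish positivity of $\tilde{\phi}_m$: since $\phi > 0$, for any $m > -\log_2 \phi$ we have $\lfloor 2^m \phi \rfloor \geq 1$, so both the floor and ceiling variants are strictly positive. Then I would combine the general bounds $\tilde{\phi}_m \leq \phi + 2^{-m}$ and $\Omega_m \upharpoonright m \geq \Omega_m - 2^{-m}$, reducing the strict upper bound $\tilde{\phi}_m < \Omega_m \upharpoonright m$ to the single numerical requirement $\Omega_m - \phi > 2^{-m+1}$. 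Setting $\delta = \Omega - \phi > 0$, I choose $m$ large enough that $\Omega_m > \phi + \delta/2$ (possible since $\Omega_m \to \Omega$) and also $2^{-m+1} < \delta/2$; then
\[
\tilde{\phi}_m \leq \phi + 2^{-m} < \Omega_m - 2^{-m} \leq \Omega_m \upharpoonright m,
\]
as required.

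For the second bullet ($\phi \geq \Omega$), I would argue uniformly in $m$. Assuming $\tilde{\phi}_m \neq 0$, monotonicity of the truncation operation $x \mapsto x \upharpoonright m$ under $\leq$ yields the chain
\[
\tilde{\phi}_m \geq \phi \upharpoonright m \geq \Omega \upharpoonright m \geq \Omega_m \upharpoonright m,
\]
where the last step uses $\Omega \geq \Omega_m$. The honest inequality obtainable by this argument is $\geq$, not strict $>$, because nothing prevents $\phi$, $\Omega$, and $\Omega_m$ from sharing their first $m$ binary digits (e.g.\ when $\phi = \Omega$ and $\Omega_m$ already approximates $\Omega$ well). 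However, because the guard in Algorithm~\ref{witness-rational} uses a strict inequality ($<$), equality $\tilde{\phi}_m = \Omega_m \upharpoonright m$ still causes $W'$ to loop, so the stated halting conclusion remains correct. I would make this reading explicit in the writeup, since it is what the subsequent Hamiltonian construction actually uses.

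The main obstacle is not any single arithmetic step but the bookkeeping around corner cases: keeping the floor and ceiling variants uniformly in hand throughout, handling the ceiling wrap-around when $\phi$ is close to $1$ (which is absorbed by the $\tilde{\phi}_m = 0$ disjunct and by the convention that $W'$ loops on $0^m$, as already discussed in Section~\ref{sec-error}), and reconciling the strict-$>$ phrasing in the second bullet with the fact that what one naturally proves is $\geq$. A secondary subtlety worth checking is that the positivity threshold for $m$ in the first bullet can be absorbed into the bound ``$2^{-m+1} < \delta/2$'' uniformly, since $\phi \geq \delta = \Omega - \phi$ need not hold; it is cleanest to take the maximum of the two thresholds rather than try to unify them.
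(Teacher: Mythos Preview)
Your proposal is correct and follows essentially the same route as the paper's proof in Appendix~\ref{sec:thm:W'transition}: both bound $\tilde{\phi}_m$ by $\phi + 2^{-m}$ and invoke $\Omega_m \to \Omega$ for the first bullet, and both use the monotonicity chain $\tilde{\phi}_m \geq \phi \upharpoonright m \geq \Omega \upharpoonright m \geq \Omega_m \upharpoonright m$ for the second. Your observation that the second bullet only yields $\geq$ rather than the stated strict $>$ is accurate and in fact more careful than the paper, which silently proves only $\geq$; as you note, this does not affect the halting conclusion since $W'$ tests a strict inequality.
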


	\subsection{From QTM to Hamiltonian}\label{sec-QTMtoH}

	The celebrated Feynman-Kitaev clock construction \cite{kitaev} provides a method for encoding quantum circuits into the ground state of a Hamiltonian. Later, Gottesman and Irani \cite{GI09} extended it to quantum Turing machines.
	Given a QTM, the resulting Hamiltonian $H_{\text{QTM}}$ will be one-dimensional and translationally invariant with nearest-neighbour interactions. 
	On a chain of $L$ qudits, the unique ground state of $H_{\text{QTM}}$ will take the form of a \emph{history state}:
	\begin{equation}
		\frac{1}{\sqrt{T+1}} \sum_{t=0}^T \ket{t} U_t \dots U_1 \ket{\psi},
	\end{equation}
	encoding a computation of a QTM on a tape of length $L-2$ for time $T$.
	In our construction, $T=T(L)=\exp(\Theta(L))$ is determined by $L$ and does not depend on the computation (e.g., whether it halts or not) \cite{CPW},
	$\{\ket{t}\}$ are orthonormal states of the clock qudits.

	We can further make the ground state energy dependent on chosen aspects of the computation by adding certain penalty terms (positive energy), resulting in a Hamiltonian, denoted $H_{\text{comp}}$, encoding QPE followed by $R$ followed by $W'$.
	In our construction, we will penalize non-halting cases.
	With a refined analysis of the ground state energy of the clock construction, we find (see Appendix \ref{app:FKgse} theorem \ref{thm:tightgap} and Appendix \ref{appendix: hamiltonian encoding} theorem \ref{L asympt})
	\begin{equation}
		\lambda_0(H_{\comp}(\phi))\sim\frac{p_{\text{\upshape{nh}}}(L)}{T^2},
	\end{equation}
	where $p_{\text{\upshape{nh}}}(L)$ is the probability that QPE outputs a finite string for which $W'$ does not halt within time $T(L)$.

	For a given $\phi$, due to Theorem \ref{thm: W'transition} and the fact that the output of QPE is concentrated around good approximations to $\phi$, we can conclude that the ground state energy of $H_{\comp}(\phi)$ is dependent on whether $\phi$ or $\Omega$ is larger:
	\begin{equation}\label{eq-Ecomp}
		\lambda_0(H_{\comp}(\phi))
		=\begin{cases}
			o(\frac{1}{T^2}) \text{~for large enough $L$},~~&\text{if $\phi<\Omega$ (halting)},\\
			\Theta(\frac{1}{T^2}) \text{~for all $ L$},~~&\text{if $\phi\geq\Omega$ (non-halting)}.
		\end{cases}
	\end{equation}
	That is, the two energies differ asymptotically. 
	Despite that QPE is both approximate and not generally deterministic, 
	our chosen classical algorithm $W'$ guarantees that the desired feature---the sharp ``transition" at $\phi=\Omega$--- carries over into the ground state energy of $H_{\comp}(\phi)$.

	\subsection{Constant Spectral Gap Separation}
	Although $\lambda_0(H_\text{comp})$ is dependent on whether $\phi<\Omega$, the energy difference between the two cases vanishes in the thermodynamic limit.
	In the following, we will leverage the (small) difference in the ground state energy to introduce a constant spectral gap separation between the two cases.
	This is achieved by the tiling\&marker construction of \cite{BCW} followed by a manipulation from \cite{Bausch_2020} (see Appendix \ref{appendix: hamiltonian encoding} for full details).

	In short, we design a classical tiling Hamiltonian so that each zero-energy ground state corresponds to a periodic partition of the lattice into squares of equal size $s$.
	The ground state subspace of the tiling Hamiltonian is highly degenerate and $s$ is unfixed. 
	The tiling Hamiltonian is then coupled to a quantum marker Hamiltonian \cite{Bausch_2020}, which can effectively read off the value of $s$ and introduce a small bonus (negative energy) inversely dependent on $s$.
	The tiling\&marker Hamiltonian is then coupled to $H_\text{comp}$, and the resulting combined Hamiltonian is denoted as $H_\uu$:
	\begin{equation}
		H_\uu=H_\text{comp}+H_\text{tiling\&marker}+\text{couplings}.
	\end{equation}
	In the ground state of the $H_\uu$, ground states of $H_\text{comp}$ (the history states) are placed on the top of each (size-$s$) square, and the ground state of the $H_\uu$ is the product state over squares. 
	The energy for each square is the sum of the contributions from $H_\text{comp}$ (Eq.(\ref{eq-Ecomp}), positive) and the marker Hamiltonian (negative).
	
	Importantly, with a well-chosen marker Hamiltonian, the (absolute value of) energy bonus can lie exactly in between the two cases in Eq. (\ref{eq-Ecomp}), so that the ground state energy of $H_\uu$ on each square will be negative and positive in the halting and non-halting case, respectively.
	
	If $\phi < \Omega$, we can consider an $s$ that is large enough for $W'$ to halt. 
	Such configuration will have a negative energy \emph{density}, as each of the size-$s$ squares contributes a small but non-zero negative energy. 
	The ground state energy hence diverges to $-\infty$ as $L\to\infty$.
	On the other hand, if $\phi \geq \Omega$ then $W'$ will never halt on $\phi$ and so the energy is always positive no matter how large $s$ is. 
	In this case, taking $s$ as large as the full system size, we see that the ground state energy vanishes from above as $L\to\infty$. 
	Therefore, we have constructed a Hamiltonian with the following behaviour in the thermodynamic limit:
	\begin{equation}
		\lambda_0(H_\uu(\phi))~
		\begin{cases}
			
			=-\infty, ~~&\text{if $\phi<\Omega$ (halting)},\\
			
			\geq 0,~~&\text{if $\phi\geq\Omega$ (non-halting)}.
			
		\end{cases}
	\end{equation} 
	
	Lastly, by a standard manipulation consisting of adding extra Hamiltonians with known spectra in a specific way (see Appendix \ref{appendix: hamiltonian encoding}) \cite{Bausch_2020}, we can construct a Hamiltonian $H_\tot(\phi)$ out of $H_\uu(\phi)$, such that $H_\tot(\phi)$ is gapless (or gapped) if $\lambda_0(H_\uu(\phi))=-\infty$ (or $\geq 0$).
	The Hamiltonian $H_\tot(\phi)$ is the final construction that we presented in Theorem \ref{theorem 1}.

	In fact, the last step is quite flexible and we can arrange the ground state of $H(\phi)$ to contain the ground state of the decoupled 1D XY spin chains if $\phi < \Omega$, and be the product state $\ket{0}^{\otimes \Lambda}$ if $\phi \geq \Omega$.
	Hence our phases are also discriminated by critical v.s trivial.
	Moreover, we can define $\mathcal{O}=L^{-2}\sum_{i\in\Lambda}\mathcal{O}_i$ where each local operator $\mathcal{O}_i$ acts as identity on $\ket{0}_i$ and annihilates the Hilbert space that the XY spin chains live in.
	Therefore, $\braket{\mathcal{O}}=0$ in the gapless phase and $\braket{\mathcal{O}}=1$ in the gapped phase.

	\section{Discussion}
	We have explicitly constructed a family of two-dimensional, translationally-invariant, nearest-neighbour Hamiltonians continuously parametrized by a real parameter $\phi$.	
	Restricting to the unit interval $\phi\in(0,1]$, we proved that there are exactly two extended phases spanning the entire parameter space, separated by a single phase transition.
	The phase transition point corresponds to Chaitin's constant $\Omega$, an intriguing real number that is well-defined but uncomputable. 
	Our results show that even in the very simple setting of only allowing a single phase transition, 
	there is still a fundamental complexity barrier to precisely determining the transition point via numerical simulations, finite-sized experiments, or mathematical reasoning.

	Uncomputable phase diagrams have have been explored in previous work \cite{BCW}, where the phase diagram maps out all instances of the Halting problem sequentially on the real line. 
	Regions corresponding to halting/non-halting are distributed in a random, uncomputable manner and alternate infinitely.
	In contrast, our construction distils this uncomputability to a single uncomputable point that encapsulates all instances of the Halting problem.
	This approach provides better analytical control over the phase diagram, allowing us to understand every point of the phase diagram modulo the intrinsic complexity of the Chaitin's constant (for which many properties are also known).
	In contrast, in \cite{BCW}, regions that can be shown to have definite gap/gapless properties are disconnected intervals with rational endpoints and do not encompass the entire phase diagram.

	Our phase diagram is periodic due to the modular arithmetic inherent in quantum phase estimation. It is straightforward to construct a model with a single phase transition over the entire real axis by substituting $\phi$ with $(1+\exp(-\phi))^{-1}$.
	Furthermore, while two phases are discriminated by gapped v.s. gapless, critical v.s. trivial, and order v.s. disorder in this work, other distinguishing properties can also be incorporated.
	By modifying the final step of our construction, we can ensure that the model has two phases: one exhibiting a desired property and the other lacking it \cite{CPW}.

	Although $\Omega$ is highly random in the precise technical sense of Martin-L\"of randomness \cite{Downey_Hirschfeldt_2010}, this property is not necessary for the uncomputability of the phase diagram. 
	Our construction would, for example, allow us to dilute the randomness in $\Omega$ by inserting a 0 between every bit in $\Omega$, and the resulting phase transition point would remain uncomputable.
	Our construction does crucially rely on the one-sided approximability (i.e. that $\Omega$ is left-computably enumerable, or left-c.e.).
	It is known that left-c.e. numbers are exactly the halting probabilities of prefix-free Turing machines.
	To construct an interesting phase diagram, we require it to be uncomputable, which is guaranteed by the universality of the Turing machine \footnote{Universality is not necessary though \cite{downey2000presentations}.}.

	Our results illustrate uncomputable numbers may \emph{emerge} as phase transition points in physics-like models, even when all underlying, microscopic data are fully computable.
	While our construction works for all one-sided computably enumerable numbers, one interesting question is whether there are other types of uncomputable numbers that could appear as a phase transition point.
	Another intriguing open question is whether and how uncomputable numbers may emerge in other physical contexts.

	\section*{Acknowledgment}
	
	J.P. is supported by the Engineering and Physical Sciences Research Council (grant number EP/S021582/1). Z.L. is supported by Perimeter Institute; research at Perimeter Institute is supported in part by the Government of Canada through the Department of Innovation, Science and Economic Development and by the Province of Ontario through the Ministry of Colleges and Universities. This work has been supported in part by the EPSRC Prosperity Partnership in Quantum Software for Simulation and Modelling (grant EP/S005021/1), and by the UK Hub in Quantum Computing and Simulation, part of the UK National Quantum Technologies Programme with funding from UKRI EPSRC (grant EP/T001062/1).

	\bibliographystyle{alpha}
	\bibliography{ref}
	\appendix
	
	\section{Quantum Turing Machines: Phase Estimation, Rounding, Witness}\label{app:3QTM}
	\subsection{Input Extraction via Phase Estimation and Rounding} \label{appendix:input extraction}
	Given the unitary gate $U_\phi=\begin{pmatrix}
		1 & 0 \\
		0 & e^{i\pi\phi}
	\end{pmatrix}$, quantum phase estimation (QPE) tries to print the binary expansion of $\phi$.
	A generic real number has an infinitely long binary expansion and so infinitely many qudits are required to represent it exactly. 
	However, we want to avoid necessitating infinitely many qudits and only run the QPE on finite, although arbitrarily large, systems. 
	The requirement that the construction should work with finite tape is common in Hamiltonian computability constructions, in order to overcome a vanishing energy gap (Eq.~(\ref{eq-Ecomp}).
	
	Fixing $\phi$, our goal is to output the best $m$-digit approximation of it, $\frac{1}{2^m} \floor{2^m \phi}$ or $\frac{1}{2^m} \ceil{2^m \phi}$,  with high probability.
	It is known that QPE with finitely many qudits outputs the best approximation to $\phi$ with probability at least $4/\pi^2$ \cite{Cleve}. 
	However, this is insufficient for our purpose. 
	To further improve the probability, we will run QPE with more qudits and then round the output to a smaller number of bits.
	
	\begin{lemma}\label{lemma:QPEn-m}
		For any $0 < m < n$, with probability $\geq 1-2^{-(n-m)}$ phase estimation on  $\phi \in [0,1)$ to $n$ bits of precision yields an $n$-bit estimate $\phi'$ satisfying $|\phi'-\phi|<\frac{1}{2^{-(m+1)}}$.
	\end{lemma}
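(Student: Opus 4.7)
The plan is to follow the standard textbook analysis of quantum phase estimation, specializing the error bound to the particular regime where we use $n$ qubits in the estimation register but only ask for correctness to $m$ bits of precision. First I would write out the state of the estimation register just before measurement. After the controlled-$U_\phi^{2^j}$ gates and the inverse quantum Fourier transform, the amplitude of basis state $\ket{k}$ is
\begin{equation}
\alpha_k \;=\; \frac{1}{2^n}\sum_{y=0}^{2^n-1} e^{2\pi i\, y(\phi - k/2^n)}
\;=\; \frac{1}{2^n}\cdot\frac{1-e^{2\pi i\,2^n(\phi-k/2^n)}}{1-e^{2\pi i(\phi-k/2^n)}}.
\end{equation}
I would then decompose $2^n\phi = a + \delta$ with $a\in\{0,\dots,2^n-1\}$ and $\delta\in[-\tfrac12,\tfrac12)$, so that $a/2^n$ is a best $n$-bit approximation to $\phi$, and relabel $k = a+\ell \pmod{2^n}$.

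Next I would derive the pointwise bound $|\alpha_{a+\ell}|^2 \le 1/(4(\ell-\delta)^2)$ for $\ell\neq 0$ by applying $|1-e^{i\theta}|\le |\theta|$ in the numerator and the reverse inequality $|1-e^{i\theta}|\ge 2|\theta|/\pi$ in the denominator (valid for $|\theta|\le\pi$), using the fact that $(\delta-\ell)/2^n$ stays in the regime where the linear lower bound applies once we work with the shortest representative of $\ell$ modulo $2^n$. Call $\phi' := k/2^n$ the measured estimate; the triangle inequality gives $|\phi'-\phi| \le |\ell|/2^n + 1/2^{n+1}$, so the event $|\phi'-\phi| \ge 2^{-(m+1)}$ implies $|\ell| \ge 2^{n-m-1}$. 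The failure probability is therefore at most
\begin{equation}
\sum_{|\ell|\,\ge\,2^{n-m-1}} |\alpha_{a+\ell}|^2
\;\le\; 2\sum_{\ell\,\ge\,2^{n-m-1}} \frac{1}{4(\ell-\tfrac12)^2},
\end{equation}
which I would bound above by an integral and simplify to obtain the claimed $2^{-(n-m)}$.

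The main obstacle is bookkeeping rather than conceptual: one must be careful about the cyclic structure of the outcome labels (so that $\ell$ is interpreted as the shortest signed distance modulo $2^n$), about the range in which the $|1-e^{i\theta}|\ge 2|\theta|/\pi$ bound is valid, and about tightening the geometric/tail sum so that the final constant matches $2^{-(n-m)}$ rather than a weaker $C\cdot 2^{-(n-m)}$. If the crude tail bound falls short by a constant factor, I would absorb it either by slightly tightening the integral estimate or by noting that the threshold $|\ell|\ge 2^{n-m-1}$ can be relaxed to $|\ell|\ge 2^{n-m-1}-1$ without changing the conclusion, which is more than enough slack to hit the stated bound. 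This is exactly the regime of the standard Cleve--Ekert--Macchiavello--Mosca estimate, so the proof is essentially a direct specialization of \cite{Cleve}.
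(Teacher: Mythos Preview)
Your approach is essentially the same as the paper's---both are the standard Cleve--Ekert--Macchiavello--Mosca tail analysis---and is correct at the level of the $\Theta(2^{-(n-m)})$ scaling. The one substantive difference is in how the \emph{exact} constant $2^{-(n-m)}$ is obtained. You bound the numerator $|1-e^{2\pi i\delta}|$ by a constant and arrive at $|\alpha_{a+\ell}|^2\le 1/(4(\ell-\delta)^2)$; the resulting tail sum is of order $1/(2^{n-m}-O(1))$, which strictly exceeds $2^{-(n-m)}$ for every finite $n-m$. Neither of your proposed fixes closes this gap: tightening the integral still leaves $\sum_{\ell\ge L}(\ell-\tfrac12)^{-2}\ge 1/L$, and relaxing the threshold to $|\ell|\ge 2^{n-m-1}-1$ enlarges the sum rather than shrinking it.

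The paper instead retains the $\phi$-dependent numerator $\sin^2(\pi 2^n\phi_0)$ throughout, telescopes the tail via $\tfrac{1}{x^2}<\tfrac12\bigl(\tfrac{1}{x-1}+\tfrac{1}{x+1}\bigr)$, and then exploits the inequality $\cos^2(\pi x/2)<1-x^2/4$ to cancel the residual factor and land exactly on $2^{-(n-m)}$. This extra $\sin^2$ factor is precisely what your cruder pointwise bound discards, and it is what buys the sharp constant. That said, for every downstream use in the paper a bound of the form $C\cdot 2^{-(n-m)}$ with a fixed constant $C$ would suffice, so your argument is adequate for the paper's purposes even if it does not prove the lemma exactly as stated.
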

	
	\begin{proof} (see also \cite{Cleve})
		Denote $\phi=\phi_0+\frac{p}{2^n}$ where $(p\in\mathbb{Z}, 0\leq \phi_0<\frac{1}{2^n})$.
		If $\phi_0=0$, then QPE is already exact. 
		If $\phi_0\neq 0$, we define $\delta(z):= \phi - \frac{z}{2^n}$ for each $z\in\mathbb{Z}$. 
		Due to the periodicity, we can demand $|\delta(z)|\leq\frac{1}{2}$.
		Then for any $z$,
		\begin{equation}
			Pr[z] = \frac{\sin^2(\pi2^n\delta(z))}{2^{2L}\sin^2(\pi\delta(z))} \leq \frac{\sin^2(\pi 2^n\phi_0)}{4^{L+1}\delta(z)^2}.
		\end{equation}
		Summing over terms with large deviations,
		\begin{equation}
			\begin{aligned}
				&Pr\left[z : |\delta(z)| \geq \frac{1}{2^{m+1}} \right]\\ 
				\leq&\frac{\sin^2(\pi 2^n\phi_0)}{4^{L+1}}
				\left[\sum_{k=0}^{\infty}\frac{1}{\left( \frac{1}{2^{m+1}} + \phi_0 +\frac{k}{2^n} \right)^2} 
				+\sum_{k=0}^{\infty}\frac{1}{\left( \frac{1}{2^{m+1}} +\frac{1}{2^n}-\phi_0 +\frac{k}{2^n} \right)^2} \right]		
				\\
				\leq& \frac{\sin^2(\pi 2^n\phi_0)}{2}
				\left(\frac{1}{2^{n-m}+2^{n+1}\phi_0-1}+
				\frac{1}{2^{n-m}-2^{n+1}\phi_0+1}
				\right)\\
				\leq&\frac{1}{2^{n-m}}.
			\end{aligned}
		\end{equation}
		The third line comes from $\frac{1}{x^2}<\frac{1}{2}(\frac{1}{x-1}+\frac{1}{x+1})$. 
		The last line comes from $\cos ^2\left(\frac{\pi  x}{2}\right)<1-\frac{x^2}{4}$ when $x\in(-1,1)$.
	\end{proof}

	This is still not yet sufficient, since an estimate that is within $\frac{1}{2^{m+1}}$ of $\phi$ need not necessarily have the same first $m$ bits as $\phi$ (for example, the first $m$ bits might be $\phi \upharpoonright m - 2^{-m}$). 
	To solve this problem, we define a ``rounding" function $r_m$ as follows:
	\begin{equation}
		r_m(x)=\begin{cases}
			x+\frac{1}{2^{m}},&~~\text{if the $(m+1)^{\text{th}}$ bit is 1}\\
			x,&~~\text{otherwise}
		\end{cases}.
	\end{equation}
	Although this is not standard rounding, it has the advantage of being trivial to construct a reversible Turing machine $R$ that implements the rounding within $n$ space.

	We define a set
	\begin{equation}\label{eq:defIm}
		\mathcal{I}_m(x)=\left\{\frac{1}{2^m} \floor{2^m x},\frac{1}{2^m} \ceil{2^m x}\right\}.
	\end{equation} 
	If $2^mx\in\mathbb{N}$ then $\mathcal{I}_m(x)=\{x\}$.
	All computations here are understood as modulo $1$.
	
	\begin{lemma}\label{lemma:QPErounding}
		For any $n \in \mathbb{N}$ and any $0 < m \leq n$, running QPE on $\phi \in [0,1)$ to $n$ bits of precision and then dovetailing with the rounding TM $R$ on input $m$ yields an output state $\ket{\chi}$ such that
		\begin{equation}
			\text{\upshape{tr}}(\ket{\chi}\bra{\chi} \cdot \mathcal{P}_{\mathcal{I}_m(\phi)}) \geq 1 - 2^{-(n-m)},
		\end{equation}
		where $\mathcal{P}_{\mathcal{I}_m(\phi)}$ projects onto $\text{\upshape{span}}(\ket{\tilde{\phi}_m})$ for $\tilde{\phi}_m \in \mathcal{I}_m(\phi)$. 
	\end{lemma}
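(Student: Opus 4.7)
The plan is to combine Lemma \ref{lemma:QPEn-m} with a deterministic correctness analysis of the rounding machine $R$. Write the state after QPE as $\ket{\psi_{\text{QPE}}} = \sum_{z\in\{0,1\}^n} \alpha_z \ket{z}$, interpreting each $n$-bit string $z$ as the approximation $\phi'_z := z/2^n$. By Lemma \ref{lemma:QPEn-m}, the total weight carried by outcomes satisfying $|\phi'_z - \phi| < 2^{-(m+1)}$ (distance taken mod $1$) is at least $1 - 2^{-(n-m)}$.

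Since $R$ is a reversible classical Turing machine, it lifts to a unitary $U_R$ that permutes computational basis states and is deterministic on each basis input. Dovetailing QPE with $R$ produces the output state $\ket{\chi}$, and its overlap with $\mathcal{P}_{\mathcal{I}_m(\phi)}$ is exactly $\sum |\alpha_z|^2$ taken over those $z$ for which the first $m$ bits of $r_m(\phi'_z)$ encode an element of $\mathcal{I}_m(\phi)$. It therefore suffices to prove the pointwise implication: whenever $|\phi'_z - \phi|<2^{-(m+1)}$, the $m$-bit truncation of $r_m(\phi'_z)$ lies in $\mathcal{I}_m(\phi)$.

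To establish the pointwise claim, I would observe that $r_m$ followed by $m$-bit truncation implements standard round-to-nearest with ties broken upward: by the definition of $r_m$, we add $2^{-m}$ precisely when the $(m+1)$th bit of $\phi'_z$ is $1$, i.e.\ when $\phi'_z$ is at least halfway between its two enclosing multiples of $2^{-m}$. Denote the rounded value by $\rho(\phi'_z)$; by construction $|\rho(\phi'_z) - \phi'_z| \leq 2^{-(m+1)}$. The triangle inequality then yields $|\rho(\phi'_z) - \phi| < 2^{-m}$. Since $\rho(\phi'_z)$ is a multiple of $2^{-m}$, the only candidates within distance strictly less than $2^{-m}$ of $\phi$ are $\lfloor 2^m\phi\rfloor/2^m$ and $\lceil 2^m\phi\rceil/2^m$, i.e.\ exactly the elements of $\mathcal{I}_m(\phi)$. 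Summing $|\alpha_z|^2$ over the good $z$ then gives the claimed lower bound $1 - 2^{-(n-m)}$.

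The main technical nuisance, rather than a genuine obstacle, will be the periodic wrap-around when $\phi$ is close to $1$: there $\lceil 2^m\phi\rceil/2^m$ must be interpreted modulo $1$ as prescribed below Eq.~(\ref{eq:defIm}), and one has to verify that the finite-precision arithmetic performed by $R$ is compatible with this convention when $r_m$ carries past the top bit. The degenerate case $2^m\phi\in\mathbb{Z}$ simply collapses $\mathcal{I}_m(\phi)$ to a singleton and the bound becomes tighter. Finally, one should check in passing that $R$ fits within the available $n$-qudit workspace without overflow, which is immediate from its construction as a bit-manipulating reversible machine.
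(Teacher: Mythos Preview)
Your proposal is correct and follows the same strategy as the paper: reduce to Lemma~\ref{lemma:QPEn-m} and then verify the pointwise implication that any $\phi'$ within $2^{-(m+1)}$ of $\phi$ rounds into $\mathcal{I}_m(\phi)$. The only cosmetic difference is that the paper establishes the pointwise claim by an explicit four-interval case analysis on the location of $\phi'$ relative to $\lfloor 2^m\phi\rfloor/2^m$ and $\lceil 2^m\phi\rceil/2^m$, whereas you argue more compactly via the triangle inequality after identifying $r_m$-then-truncate as round-to-nearest; both arguments are equivalent, and your handling of the periodic wrap-around and the degenerate case $2^m\phi\in\mathbb{Z}$ matches the paper's.
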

	\begin{proof} 
		By lemma \ref{lemma:QPEn-m}, it is enough to show that, if $|\phi'-\phi| < \frac{1}{2^{m+1}}$ then $r_m(\phi')\upharpoonright m\in\mathcal{I}_m(\phi)$.
		First assume $2^m\phi\notin\mathbb{N}$. 
		Since $\phi\in[\frac{1}{2^m}\floor{2^m\phi},\frac{1}{2^m}\ceil{2^m\phi}]$, we have 
		\begin{equation}
			\phi'\in(x_1,x_2)\cup[x_2,x_3)\cup[x_3,x_4)\cup[x_4,x_5),
		\end{equation}
		where
		$x_1=\frac{1}{2^m}\floor{2^m\phi}-\frac{1}{2^{m+1}}$, 
		$x_2=\frac{1}{2^m}\floor{2^m\phi}$,
		$x_3=\frac{1}{2^m}\floor{2^m\phi}+\frac{1}{2^{m+1}}$,
		$x_4=\frac{1}{2^m}\ceil{2^m\phi}$,
		$x_5=\frac{1}{2^m}\ceil{2^m\phi}+\frac{1}{2^{m+1}}$.
		We know $r_m(\phi')=\phi'+\frac{1}{2^{m}}$ for $\phi'$ in the first and third interval, while $r_m(\phi')=\phi'$ otherwise.
		Hence,
		\begin{equation}
			r_m(\phi')\in\left[\frac{1}{2^m}\floor{2^m\phi},\frac{1}{2^m}\ceil{2^m\phi}+\frac{1}{2^m}\right).
		\end{equation}
		It is then clear that $r_m(\phi')\upharpoonright m\in\mathcal{I}_m(\phi)$.
		
		The case where $2^m\phi\in\mathbb{N}$ is even simpler.
	\end{proof}

	The above lemma works for QPE with exact gates depending on $n$.
	In our construction, we need to find (for each $\phi$) a single quantum Turing machine such that when implemented on tapes of varying lengths, it performs QPE for different values of $n$.
	The QPE-QTM does not know $n$ a priori and must be $n$-independent.
	Therefore, we have to use a fixed gate set to synthesize the $n$-dependent gate sets.
	This will result in another source of error to be analysed.
	
	\begin{lemma}\label{QTM family}
		There exists a family of QTMs $P_\phi$ indexed by $\phi \in [0,1)$, all with identical internal states and symbols but differing transition rules, with the property that on input $L$ in unary, $P_\phi$ halts after $O(\text{\upshape{\text{poly}}}(n)2^n)$ steps, uses $n+3$ tape, and outputs $\ket{\tilde{\chi}}$ s.t. for all $0<m\leq n$, after dovetailing with the rounding TM $R$ on input $m$ in unary we have
		\begin{equation}
			\text{\upshape{tr}} \left( \ket{\tilde{\chi}}\bra{\tilde{\chi}} \cdot 
			\mathcal{P}_{\mathcal I_m(\phi)} \right) \geq 1-2^{-(n-m)} -\delta(n),
		\end{equation}
		where $\delta(n) < \left(\frac{n^2}{2}\right)2^{-c_2n^{1/c_1}}$ for constants $3 < c_1 < 4$, $c_2\geq 1$ that can be written down explicitly.
	\end{lemma}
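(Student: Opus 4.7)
The strategy is to implement, as a single QTM whose transition table does not depend on $n$, the phase-estimation circuit used in Lemma \ref{lemma:QPErounding} dovetailed with the reversible rounding TM $R$. The unary input $L$ lays out $n+3$ tape cells: $n$ for the QPE output register, one for the eigenstate of $U_\phi$, and two cells of scratch for head motion. The only $\phi$-dependent transition rule is the one realising the fixed single-qubit gate $U_\phi$; every other rule is drawn from a universal, $\phi$- and $n$-independent vocabulary.

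The obstacle to fixing the gate set is that the inverse QFT uses controlled-phase gates $R_k=\mathrm{diag}(1,e^{2\pi i/2^k})$ with $n$-dependent angles for $k=1,\dots,n$. I would approximate each $R_k$ from the fixed vocabulary by invoking the Solovay--Kitaev theorem, which yields operator-norm error $\epsilon$ with $O(\log^{c_1}(1/\epsilon))$ gates for an explicit exponent $c_1\in(3,4)$. Choosing $\epsilon=2^{-c_2 n^{1/c_1}}$, with $c_2$ the implicit Solovay--Kitaev constant, keeps each synthesis of length $O(n)$; the recursion itself is a fixed algorithm that the QTM executes on demand, reading $k$ from its head position, so the transition table stays $n$-independent. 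The controlled powers $U_\phi^{2^k}$ are realised by $2^k$ repetitions of $U_\phi$, producing the $O(2^n)$ factor; the $O(n^2)$ QFT gates together with $\mathrm{poly}(n)$ per-gate tape-sweep overhead give the stated $O(\mathrm{poly}(n)\cdot 2^n)$ runtime.

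For the error bound, let $\ket{\chi}$ denote the exact QPE output to which Lemma \ref{lemma:QPErounding} applies and $\ket{\tilde{\chi}}$ the actual output of $P_\phi$. Subadditivity of error under composition of unitaries yields $\norm{\ket{\tilde{\chi}}-\ket{\chi}}\leq O(n^2)\epsilon\leq \tfrac{n^2}{2}\,2^{-c_2 n^{1/c_1}}=:\delta(n)$ after absorbing constants into $c_2$. Combining this with the standard inequality $\mathrm{tr}(\ket{\tilde{\chi}}\bra{\tilde{\chi}}\mathcal{P}_{\mathcal I_m(\phi)})\geq \mathrm{tr}(\ket{\chi}\bra{\chi}\mathcal{P}_{\mathcal I_m(\phi)})-2\norm{\ket{\tilde{\chi}}-\ket{\chi}}$ and Lemma \ref{lemma:QPErounding} (noting that the rounding TM $R$ is exact and reversible) produces the claimed bound. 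The main difficulty lies in the interface between the circuit and QTM models: a single fixed transition table must sweep the bounded tape, compute each index $k$ from head position, run the Solovay--Kitaev recursion to emit the appropriate gate sequence, and apply it to the correct cells, all within $n+3$ tape and the claimed runtime. This is the sort of bookkeeping familiar from Gottesman--Irani, but here it must be interleaved with the on-the-fly gate compilation.
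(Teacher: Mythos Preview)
Your proposal is correct and follows essentially the same route as the paper: take the QPE QTM construction, replace the $n$-dependent rotation gates by Solovay--Kitaev approximants from a fixed gate set (with per-gate precision $2^{-c_2 n^{1/c_1}}$ forced by the $O(n)$ space budget), accumulate the error over the $O(n^2)$ gates of the inverse QFT to obtain $\delta(n)$, and then invoke Lemma~\ref{lemma:QPErounding}. The paper's proof is a two-line citation to the QPE QTM of \cite{CPW} and the Solovay--Kitaev treatment in \cite[Lem.~B.6]{BCW}; your write-up unpacks exactly the same mechanism in more detail, including the $O(2^n)$ runtime from iterating $U_\phi$ and the circuit-to-QTM bookkeeping.
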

	\begin{proof}
		Take the family of QPE QTMs from \cite{CPW} and modify such that all rotation gates are Solovay-Kitaev approximated (as in \cite[Lem. B.6]{BCW}). This gives $\delta(n)$ as claimed ($\frac{n^2}{2}$ gates need to be approximated, and within space $n$ the precision of each is limited to $2^{-c_2n^{1/c_1}}$). Combined with lemma \ref{lemma:QPErounding}, we are done.
	\end{proof}

	\subsection{Error Analysis: More on Theorem \ref{thm: W'transition}}\label{sec:thm:W'transition}

	The algorithm $W'$ will effectively read $\tilde{\phi}_m=(r_m{\phi'}) \upharpoonright m$.
	Typically, QPE may have underestimated $\phi$ as $\frac{1}{2^m}\lfloor2^m\phi\rfloor$, which equals $\phi \upharpoonright m$; or overestimated $\phi$ as $\frac{1}{2^m}\lceil2^m\phi\rceil$, which equals $\phi \upharpoonright m+2^{-m}$ if $2^m\notin\mathbb{N}$.
	Here we prove Theorem \ref{thm: W'transition}, which claims that for underestimation and overestimation within these ranges, a sharp halting/non-halting separation is retained at $\Omega$.

	First, consider the case $\phi\in(0,\Omega)$.
	\begin{lemma}\label{preserves}
		If $\phi < \Omega$ then there exists $m_0<\infty$ such that $\phi+2^{-m} < \Omega_m \upharpoonright m$ for all $ m>m_0$.
	\end{lemma}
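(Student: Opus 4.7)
The plan is to compare the two sources of error in the inequality $\phi + 2^{-m} < \Omega_m \upharpoonright m$. On the right-hand side we lose two quantities as $m$ grows: the gap $\Omega - \Omega_m$ (from the fact that we have only simulated $M$ for $m$ steps on $m$ inputs), and the truncation error $\Omega_m - \Omega_m \upharpoonright m$, which is at most $2^{-m}$ by definition of $\upharpoonright m$. On the left-hand side the excess over $\phi$ is also $2^{-m}$. Since $\Omega - \phi$ is a strictly positive constant independent of $m$, all three $m$-dependent errors can be made arbitrarily small compared to it, which is what drives the argument.

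Concretely, I would set $\varepsilon := \Omega - \phi > 0$ and proceed in three short steps. First, invoke the monotone convergence $\Omega_m \nearrow \Omega$ established in Section~\ref{sec-Chaitin} to pick $m_1$ such that $\Omega_m > \Omega - \varepsilon/3$ for every $m \geq m_1$. Second, pick $m_2$ so that $2^{-m} < \varepsilon/6$ for $m \geq m_2$, and set $m_0 := \max(m_1, m_2)$. Third, for any $m > m_0$ combine these with the bound $\Omega_m \upharpoonright m \geq \Omega_m - 2^{-m}$ to estimate
\begin{equation*}
\Omega_m \upharpoonright m \;\geq\; \Omega_m - 2^{-m} \;>\; \Omega - \tfrac{\varepsilon}{3} - \tfrac{\varepsilon}{6} \;=\; \phi + \varepsilon - \tfrac{\varepsilon}{2} \;>\; \phi + \tfrac{\varepsilon}{6} + 2^{-m} \;>\; \phi + 2^{-m},
\end{equation*}
where the penultimate inequality uses $\varepsilon/2 = \varepsilon/3 + \varepsilon/6$ and $\varepsilon/6 > 2^{-m}$.

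There is no real obstacle here: the lemma is a straightforward quantitative consequence of the fact that $\Omega_m$ is a computable monotone approximation to $\Omega$ together with the trivial bound on truncation. The only minor bookkeeping issue is making sure one does not double-count the $2^{-m}$ error, which I handle above by partitioning the budget $\varepsilon$ into three pieces (one for the convergence of $\Omega_m$, one for the truncation, one absorbed by the additive $2^{-m}$ on the left). All constants $m_1$, $m_2$ depend on $\phi$ and $\Omega$ but do not need to be computable, which is consistent with the rest of the paper's use of non-effective bounds on $\Omega$.
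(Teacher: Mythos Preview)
Your proof is correct and follows essentially the same idea as the paper's: both use the monotone convergence $\Omega_m \nearrow \Omega$ together with the trivial truncation bound $\Omega_m - \Omega_m\upharpoonright m \le 2^{-m}$, and then choose $m$ large enough so that all errors are dominated by the fixed gap $\Omega-\phi$.

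The only difference is packaging. The paper first proves the auxiliary claim $\lim_{s\to\infty}\Omega_s\upharpoonright s = \Omega$ (by observing that eventually $\Omega_s\upharpoonright m = \Omega\upharpoonright m$ for any fixed $m$), and then picks $m_1$ with $\phi+2^{-m_1}<\Omega$ and $m_2$ with $\phi+2^{-m_1}<\Omega_{m}\upharpoonright m$ for all $m\ge m_2$. You instead keep the two error sources $\Omega-\Omega_m$ and $\Omega_m-\Omega_m\upharpoonright m$ separate and budget them explicitly against $\varepsilon=\Omega-\phi$. Your version is slightly more elementary in that it avoids the intermediate limit statement; the paper's version has the minor advantage that the auxiliary claim $\Omega_s\upharpoonright s\to\Omega$ is reusable. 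Neither approach gains anything substantive over the other.
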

	\begin{proof}
		We claim that $\lim_{s\to\infty}\Omega_s\upharpoonright s=\Omega$.
		Indeed, fixing a $m$, since $\Omega_s$ converge to $\Omega$ from below, we have $\Omega_s \upharpoonright m=\Omega \upharpoonright m$ for large enough $s$. 
		Therefore $\lim_{s\to\infty}\Omega_s\upharpoonright s\geq \Omega \upharpoonright m$.
		Since $m$ is arbitrary, we get $\lim_{s\to\infty}\Omega_s\upharpoonright s\geq \Omega$. The other direction is obvious.
		
		For $\phi < \Omega$, we can pick $m_1$ such that $\phi+2^{-m_1}<\Omega$. 
		We pick $m_2$, such that $\phi+2^{-m_1}<\Omega_{m_2} \upharpoonright m_2$.
		Now, pick $m_0=\max\{m_1,m_2\}$, the lemma is proved.
	\end{proof}
	No matter whether $\tilde{\phi}_m=\frac{1}{2^m}\lfloor2^m\phi\rfloor$ or $\frac{1}{2^m}\lceil2^m\phi\rceil$, it always satisfies $\tilde{\phi}_m<\phi+2^{-m}$. 
	Moreover, for a fixed $\phi\in(0,\Omega)$, $\frac{1}{2^m}\lfloor2^m\phi\rfloor\neq 0$ for sufficiently large $m$.
	This fact, combined with the above lemma, implies that $0<\tilde{\phi}_m<\Omega_m \upharpoonright m$ for sufficiently large $m$, implying that $W'(\tilde{\phi}_m)$ will halt for sufficiently large $m$.

	Next, consider the case $\phi\in[\Omega,1]$.
	No matter whether $\tilde{\phi}_m=\frac{1}{2^m}\lfloor2^m\phi\rfloor$ or $\frac{1}{2^m}\lceil2^m\phi\rceil$,
	we always have  $\tilde{\phi}_m\geq \phi\upharpoonright m\geq \Omega\upharpoonright m\geq \Omega_m \upharpoonright m$ for all $m$ due to the monotonicity of $\Omega_m$ and the truncation.
	
	Even if in the special case such that QPE may overestimate $\phi$ as 1 and returns all zeros on the $m$ bits (since QPE works modulo 1), it is still a non-halting case: we have manually defined $W'$ to loop forever upon reading $0^m$.
	
	Therefore, $W'(\tilde{\phi}_m)$ does not halt for any $m$ if $\phi\in[\Omega,1]$.

	\subsection{Constructing the Quantum Turing Machine}
	
	We have almost finished the construction of a three-stage QTM consisting of the dovetail of QPE, $R$, and $W'$. 
	The QPE QTM approximately extracts a parameter $\phi$ using phase estimation and prints it on a tape in superposition. 
	For each element of the superposition, $R$ reversibly rounds it to $m$ bits of precision (morally). 
	Then $W'$ generates an approximation to Chaitin's constant, and compares it with the output of $R$.
	
	There are two strict requirements this three-stage QTM must satisfy.
	Firstly, for the three-stage QTM to be a \emph{quantum} TM it must be reversible (unitary). 
	Secondly, we will soon encode the three-stage QTM in the low energy subspace of a Hamiltonian on qudit chains of arbitrarily long but finite length. 
	The length of the qudit chain $L$ corresponds to the QTM tape length $n$ up to a constant.
	We will require that the QTM only requires constant space overhead---i.e. that it can extract $n$-bit estimates of $\phi$ and perform calculations on them using only $n+constant$ workspace.
	This $n+constant$ workspace can easily then be made to fit inside a qudit chain of length $L$.
	This second requirement makes satisfying reversibility more challenging. 
	It has been known how to make any Turing machine reversible since the 1970s \cite{bennett1973logical}, but the technique requires $\log T$ overhead \cite{bennett1989time}.
	Fortunately, \cite{CPWfull} explicitly constructs a toolbox of useful space-efficient QTMs that we shall exploit.

	As stated in lemma \ref{QTM family}, the QPE QTM is reversible and requires at most $n+3$ workspace.
	The classical ``rounding'' TM $R$ is also reversible and requires no more than a length of $n+3$. 
	To construct the space-bounded three-stage QTM it remains to show that $W'$ can also be made reversible and space-efficient; i.e. that the approximation to $\Omega$ can be generated reversibly with no greater than constant space overhead, and likewise for the subsequent comparison to the approximation to $\phi$.

	We shall first analyse the generation of approximations to $\Omega$ (alg. \ref{algo-chaitin-approx}). 
	Order $x \in \{0,1\}^*$ first by length and then lexicographically. 
	On $L$ tape, for a given input $m\leq L$, over the course of the algorithm the bit strings $x_1, x_2, \dots, x_m$ will be written on the tape. 
	For efficiency of space, we shall set the algorithm to increment $x_i$ to $x_{i+1}$ once $M(x_i)$ has been run for $m$ steps, instead of writing out $x_{i+1}$ on a new section of the tape. 
	As $\max |x_i| = |x_m| \leq L$, there will always be enough space to write each string. 
	For each $x_i$, $M(x_i)$ is run for exactly $m$ steps. 
	The Turing machine $M$ must be reversible for this to be reversible. 
	As within $m$ time steps, the most space that can be used is $m$ cells, this procedure is also tightly space bounded.
	Finally, we must also add $2^{-|x_i|}$ to a number on a different tape (that will be output as $\Omega_m$ at the algorithm end) if $M(x_i)$ has halted after $m$ steps. 
	This can be done reversibly and space-efficiently using a QTM from the \cite{CPWfull} toolbox.
	
	For the latter subroutine, the QTMs from \cite{CPWfull} include a checker for equality between two bit strings.
	This equality-checking QTM can be trivially modified to include an extra argument $m$, where $m$ is a number written in unary, such that the new \textsc{Equality2} TM only tests for equality on the first $m$ bits of the two inputs. 
	Then by using \textsc{Equality2} we can construct a reversible space-efficient TM that compares the magnitudes of the first $m$ bits of two input bit strings (i.e. of an element of the QPE+R output and $\Omega_m$) and signals whichever is greater by flipping or not flipping a bit on an output tape.
	
	With these modifications, we have constructed a reversible classical Turing machine $W'_{\text{rev}}$ that takes an $n$-bit string $\tilde{\phi}$ and a natural number $m$ written in unary as input and uses no more than $n+3$ space.
	Before the computation begins, it has a tape with a single bit $c$ written on it.  
	It halts and outputs $\neg c$ if $0^m < \tilde{\phi} \upharpoonright m < \Omega_m \upharpoonright m$ and runs forever otherwise.
	
	As reversible classical Turing machines are special cases of quantum Turing machines, both $R$ and $W'_{\text{rev}}$ are also QTMs, and so the three-stage QTM is reversible and space-efficient. 
	
	\section{Ground State Energy of Feynman-Kitaev Hamiltonian}\label{app:FKgse}
	We consider the Feynman-Kitaev Hamiltonian
	\begin{equation}
		H=H_{\prop}+H_{\inn}+H_{\out},
	\end{equation}
	where
	\begin{gather}
		H_{\prop}=\sum_{t=0}^{T-1} (\ket{t}\bra{t}\otimes\mathbbm{1}+\ket{t+1}\bra{t+1}\otimes\mathbbm{1}
		-\ket{t+1}\bra{t}\otimes U_{t+1} - \ket{t}\bra{t+1}\otimes U_{t+1}^\dagger
		),\\
		H_{\inn}=\ket{0}\bra{0}\otimes(\sum \Pi_{\inn}),\\
		H_{\out}=\ket{T}\bra{T}\otimes\Pi_{\out}.
	\end{gather}
	Here $H_{\inn}$ and $H_{\out}$ penalize wrong inputs and outputs, respectively.
	Note that while $H_{\out}$ only contains one projector without loss of generality, $H_{\inn}$ might contain several projectors so that each projector is local.
	
	Define
	\begin{equation}
		\epsilon=\max_{\ket{\phi}\in\ker(\sum\Pi_{\inn}), \ket{\eta}\in\ker \Pi_{\out}}
		|\bra{\eta}U_T\cdots U_1\ket{\phi}|^2.
	\end{equation}
	A large $\epsilon$ means there \emph{exists} a correctly initialized state that will be accepted by the computation with large probability (the ``YES" instance), while a small $\epsilon$ means \emph{any} correctly initialized state is rejected with large probability (the ``NO" instance).

	To start with, we quote the following result from \cite{Watson19} (Theorem 6.1 therein, rephrased)\footnote{
		The result in \cite{Watson19} is valid (with a slight modification) for the family of Hamiltonians called standard-form Hamiltonians. It includes the Feynman-Kitaev Hamiltonians as a special case and also includes more complicated clock constructions such as \cite{GI09,CPW}. 
		However, the analysis reduces to the Feynman-Kitaev Hamiltonian, thanks to the Clairvoyance Lemma \cite{aharonov2009power,CPW,Watson19}.
		In our paper, we will apply the results for standard-form Hamiltonians. 
	}.
	\begin{lemma}\label{gse upper bound}
		If there exists a computational path such that the final state of the computation $\ket{\psi_T}$ satisfies $					\bra{T}\bra{\psi_T}H_{\out}\ket{\psi_T}\ket{T} \leq \eta$
		then the ground state energy is bounded by 
		\begin{equation}
			0 \leq \lambda_{\min}(H) = O\left(\frac{\eta}{T^2}\right).
		\end{equation}
	\end{lemma}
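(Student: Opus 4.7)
The plan is to establish the two bounds separately.

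The non-negativity $\lambda_{\min}(H)\geq 0$ is immediate. Both $H_{\inn}=\ket{0}\bra{0}\otimes(\sum\Pi_{\inn})$ and $H_{\out}=\ket{T}\bra{T}\otimes\Pi_{\out}$ are tensor products of clock projectors with workspace projectors, hence positive semi-definite. For $H_{\prop}$, each local term $h_t$ annihilates history basis vectors $\ket{s,\psi}$ with $s\notin\{t,t+1\}$, and on the two-dimensional invariant subspace spanned by $\ket{t,\phi}$ and $\ket{t+1,U_{t+1}\phi}$ it acts as $\begin{pmatrix}1&-1\\-1&1\end{pmatrix}$. A direct expansion gives $h_t^2=2h_t$, so $h_t/2$ is a projector; hence $h_t\geq 0$ and $H\geq 0$.

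For the upper bound I would apply the variational principle to a history-state trial wavefunction
\begin{equation*}
\ket{\Psi}\;=\;\sum_{t=0}^{T}c_t\,\ket{t}\otimes U_t\cdots U_1\ket{\psi_0},
\end{equation*}
where $\ket{\psi_0}\in\ker(\sum\Pi_{\inn})$ is the witness input obtained by running the circuit backward from $\ket{\psi_T}$, and $\{c_t\}$ is an amplitude profile to be optimized subject to $\sum_t|c_t|^2=1$. Because $\ket{\psi_0}\in\ker(\sum\Pi_{\inn})$, the term $H_{\inn}$ contributes zero, and standard manipulations in the history basis reduce the energy to the one-dimensional quadratic form
\begin{equation*}
\bra{\Psi}H\ket{\Psi}\;=\;\sum_{t=0}^{T-1}|c_t-c_{t+1}|^2\;+\;\bra{\psi_T}\Pi_{\out}\ket{\psi_T}\,|c_T|^2\;\leq\;\sum_{t=0}^{T-1}|c_t-c_{t+1}|^2+\eta\,|c_T|^2,
\end{equation*}
a discrete Laplacian eigenvalue problem with a Robin-type boundary penalty of strength $\eta$ at the endpoint $t=T$.

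The uniform profile $c_t=1/\sqrt{T+1}$ immediately yields the classical Kitaev bound $\eta/(T+1)=O(\eta/T)$. The main technical obstacle is extracting the additional factor of $1/T$ needed to reach the sharper $O(\eta/T^2)$ claimed by the lemma. This improvement should come from the standard-form refinement of \cite{Watson19}: by exploiting additional algebraic structure of the clock Hilbert space, together with a trial profile that spreads amplitude across a padded tail of idle computation steps, the effective boundary weight $|c_T|^2$ can be depressed by a further factor of $1/T$ while keeping the gradient cost $\sum_t|c_t-c_{t+1}|^2$ of the same order. Once the correct padded profile is identified and its energy computed, both the uniform-profile estimate above and the refined tail construction combine to give $\lambda_{\min}(H)\leq\bra{\Psi}H\ket{\Psi}=O(\eta/T^2)$, completing the lemma.
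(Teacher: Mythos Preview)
Your non-negativity argument is fine, and the reduction of $\bra{\Psi}H\ket{\Psi}$ to the one-dimensional quadratic form is correct. The gap is in the final step: the one-dimensional problem you arrive at does \emph{not} have minimum $O(\eta/T^2)$ when $\eta=o(1)$, and no choice of profile $\{c_t\}$ can repair this. Solving $\min\bigl(\sum_{t}|c_t-c_{t+1}|^2+\eta\,|c_T|^2\bigr)$ over normalized profiles exactly (ansatz $c_t=\cos(\kappa(t+\tfrac12))$) gives the smallest root of $(\eta-1+\cos\kappa)\cos\!\bigl(\kappa(T+\tfrac12)\bigr)=\sin\!\bigl(\kappa(T+\tfrac12)\bigr)\sin\kappa$, and one finds $\lambda_0=\Theta\bigl(\min(\eta/T,\,1/T^2)\bigr)$, which is $\omega(\eta/T^2)$ whenever $\eta=o(1)$---precisely the regime needed later in Theorem~\ref{L asympt}. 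Your ``padded idle tail'' suggestion cannot help: padding only rescales $T$, and since $H_{\out}=\ket{T}\bra{T}\otimes\Pi_{\out}$ the penalty still sits at a single terminal clock state, so you are returned to the same one-dimensional problem.

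The paper does not supply its own proof here (the lemma is quoted from \cite{Watson19}), but the mechanism that actually produces the extra $1/T$ is visible in the case-5 analysis of Theorem~\ref{thm:tightgap}. After Jordan-decomposing the pair $(\tilde\Pi_{\inn},\tilde\Pi_{\out})$, the relevant two-dimensional block couples the direction in $\ker\Pi_{\inn}$ to its orthogonal partner; the ground state of that $2(T{+}1)$-dimensional block (a quantum walk on a \emph{doubled} path with an impurity, Eq.~\eqref{eq:FKbigmat}) mixes both directions and has energy $\Theta(\mu/T^2)$ with $\mu\le\eta$. In variational language, the optimal trial state \emph{pays a small input penalty} in exchange for a much larger reduction of the output penalty, and therefore lies outside your ansatz class, which is confined to $\ker(\sum\Pi_{\inn})$. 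To fix the argument you must allow the trial state to leave that kernel---e.g.\ take the exact ground state of the $2(T{+}1)\times 2(T{+}1)$ matrix in Eq.~\eqref{eq:FKbigmat} for the Jordan block containing your witness input.
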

	In our notation, it claims $\lambda_0(H)=O(\frac{1-\epsilon}{T^2})$.
	
	We complement this result with the following:
	\begin{theorem}\label{thm:tightgap}
		$\lambda_0(H)=\Theta(\frac{1-\epsilon}{T^2})$. 
	\end{theorem}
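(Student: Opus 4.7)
The upper bound $\lambda_0(H)=O((1-\epsilon)/T^2)$ is Lemma~\ref{gse upper bound}, so the task is the matching lower bound $\lambda_0(H)\ge\Omega((1-\epsilon)/T^2)$. The plan is to conjugate $H$ by the history unitary, decouple the computational factor via Jordan's lemma, and then analyse the resulting block-diagonal problem on the clock register by a Schur-complement reduction to a rank-one-perturbed 1D Laplacian.

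Conjugation by $W:=\sum_t\ket{t}\bra{t}\otimes U_t\cdots U_1$ preserves the spectrum of $H$ and yields $\tilde H=E\otimes\mathbbm{1}+\ket{0}\bra{0}\otimes\sum\Pi_\inn+\ket{T}\bra{T}\otimes\tilde\Pi_\out$, with $E$ the Neumann path graph Laplacian on $T+1$ sites (kernel $\mathrm{span}\,\ket{+}$, spectral gap $\Theta(1/T^2)$) and $\tilde\Pi_\out:=U_1^\dagger\cdots U_T^\dagger\Pi_\out U_T\cdots U_1$. Using $\sum\Pi_\inn\succeq c_\inn(\mathbbm{1}-P_\inn)$, where $P_\inn$ is the orthogonal projector onto $\ker(\sum\Pi_\inn)$ and $c_\inn:=\lambda_{\min}^+(\sum\Pi_\inn)=\Theta(1)$, turns both penalty terms into projector-valued operators on $\mathbb{C}^d$. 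Jordan's lemma applied to the pair $(\mathbbm{1}-P_\inn,\tilde\Pi_\out)$ decomposes $\mathbb{C}^d$ into 2D invariant subspaces $V_i$ with principal angles $\theta_i$; by the definition of $\epsilon$, the worst such block satisfies $\cos^2\theta_i\ge 1-\epsilon$. Since $E\otimes\mathbbm{1}$ respects this decomposition, $\tilde H$ is block-diagonal and it suffices to lower-bound each $\tilde H|_{V_i}$.

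Within a block, parametrise states as $\ket{a}\otimes\ket{\hat a_i}+\ket{b}\otimes\ket{\hat a_i^\perp}$ in the basis $\{\hat a_i,\hat a_i^\perp\}$ aligned with $\ker P_\inn$; the energy reads $\bra{a}E\ket{a}+\bra{b}M_i\ket{b}+(\cos\theta_i\,a_T+\sin\theta_i\,b_T)^2$ with $M_i:=E+c_\inn\ket{0}\bra{0}+\sin^2\theta_i\ket{T}\bra{T}$ strictly positive definite. Eliminating $\ket{b}$ by a Schur complement reduces the problem to the rank-one-perturbed clock Laplacian $E+\mu_i\ket{T}\bra{T}$ with effective coupling $\mu_i=\cos^2\theta_i(1-\sin^2\theta_i\bra{T}M_i^{-1}\ket{T})$. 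Solving $M_i v=\ket{T}$ explicitly (its interior equation forces $v_t=a+bt$, and the two boundary conditions fix $a,b$) gives the closed form $\bra{T}M_i^{-1}\ket{T}=(1+c_\inn T)/(c_\inn+\sin^2\theta_i(1+c_\inn T))$, so $1-\sin^2\theta_i\bra{T}M_i^{-1}\ket{T}=c_\inn/(c_\inn+\sin^2\theta_i(1+c_\inn T))\ge\Omega(1/T)$, whence $\mu_i\ge\Omega((1-\epsilon)/T)$. Finally, a direct analysis of $E+\mu\ket{T}\bra{T}$ (first-order perturbation theory around $\ket{+}$ gives $\lambda_0\sim\mu/T$ for $\mu\lesssim 1/T$, saturating at the Dirichlet--Neumann ground-state eigenvalue $\sim\pi^2/(4T^2)$ for $\mu\gtrsim 1/T$) yields $\lambda_0(\tilde H|_{V_i})\ge\Omega((1-\epsilon)/T^2)$ in either regime (either $\mu_i/T\gtrsim(1-\epsilon)/T^2$ perturbatively, or the saturated $1/T^2$ is itself $\ge(1-\epsilon)/T^2$), and hence globally.

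The main obstacle is the Schur-complement step, specifically obtaining the sharp estimate $\sin^2\theta_i\bra{T}M_i^{-1}\ket{T}\le 1-\Omega(1/T)$: a crude operator-norm bound $\|M_i^{-1}\|=\Theta(T^2)$ would falsely predict a diverging cross-coupling, and the closed-form computation above is essential to confirm that the small Green's-function correction is exactly of size $\Theta(1/T)$. I expect the rest of the bookkeeping (checking the reduction to the block structure, and combining the perturbative and saturated regimes of $E+\mu\ket{T}\bra{T}$) to be routine. Notably, a naive application of Kitaev's geometrical lemma to the splitting $(H_\prop+H_\inn)$ versus $H_\out$ gives only $\Omega((1-\epsilon)/T^3)$ because the angle between the two kernels scales as $1/\sqrt{T}$; the sharp $1/T^2$ bound here is recovered by exploiting the hierarchy $c_\inn\gg\mathrm{gap}(H_\prop)=\Theta(1/T^2)$ through the Jordan--Schur reduction.
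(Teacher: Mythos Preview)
Your high-level strategy matches the paper's exactly: conjugate by the history unitary $W$, replace $\sum\Pi_\inn$ by the projector onto its range, and apply Jordan's lemma to block-diagonalise over the computational register. The paper's five-case breakdown and your 2D Jordan blocks are the same decomposition. Where you diverge is in the analysis of the nontrivial 2D block: the paper reorders the $2(T+1)\times 2(T+1)$ matrix into a single path-graph Laplacian with an ``impurity'' in the middle (their Eq.~\eqref{eq:FKbigmat}), solves it exactly by a plane-wave ansatz, and reads off $\lambda_0=\Theta(\mu/T^2)$ directly; you instead Schur-complement out the $b$-sector and analyse the resulting $(T+1)$-dimensional rank-one-perturbed Laplacian.

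There is a real gap in the Schur step, and it is not the one you flag. Minimising the quadratic form over $b$ unconstrained indeed gives $Q(a,b)\ge\langle a|S(0)|a\rangle$ with $S(0)=E+\mu_i\ket{T}\bra{T}$, but the Rayleigh quotient divides by $\|a\|^2+\|b\|^2$, not $\|a\|^2$. Consequently the Schur complement at $\lambda=0$ yields only the \emph{upper} bound $\lambda_0(H_i)\le\lambda_0(S(0))$; the lower bound you need is $\lambda_0(H_i)\ge\lambda_0(S(0))\cdot\|a^*\|^2$, where $(a^*,b^*)$ is the true minimiser, and nothing in your argument controls $\|a^*\|^2$ away from zero. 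Equivalently, the correct relation is the implicit one $\lambda_0(H_i)=\lambda_0(S(\lambda_0(H_i)))$, and turning this into a quantitative lower bound requires either bounding $\langle T|(M_i-\lambda)^{-1}|T\rangle$ at energy $\lambda\sim(1-\epsilon)/T^2$ (strictly harder than the $\lambda=0$ Green's function you computed, since a crude $(M_i-\lambda)^{-1}\le 2M_i^{-1}$ already makes the bracket $1-2\sin^2\theta_i\langle T|M_i^{-1}|T\rangle$ negative when $\sin^2\theta_i\gtrsim 1/T$), or else bounding $\|b^*\|/\|a^*\|=O(1)$ via the eigenvalue equation. Both are doable but neither is the ``routine bookkeeping'' you anticipate; this is the actual obstacle, not the $\mu_i>0$ estimate. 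The paper's exact diagonalisation sidesteps the issue entirely by never separating $a$ from $b$.
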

	\begin{proof}
		We only need to prove 	$\lambda_0(H)=\Omega(\frac{1-\epsilon}{T^2})$.
		
		Conjugating $H$ by a control unitary $W=\sum_{t=0}^{T}\ket{t}\bra{t}\otimes U_t\cdots U_1$, we can assume 
		\begin{equation}
			\begin{aligned}
				H=H_{\prop}+H_{\inn}+H_{\out}
				&=\Delta\otimes\mathbbm{1}+ \ket{0}\bra{0}\otimes(\sum\Pi_{\inn}) + \ket{T}\bra{T}\otimes\tilde\Pi_{\out}\\
			\end{aligned}
		\end{equation}
		where $\Delta$ is the Laplacian of a path graph of $T + 1$ vertices and is positive semi-definite, 
		$\tilde\Pi_{\out}=(U_T\cdots U_1)^\dagger \Pi_{\out} (U_T\cdots U_1)$.
		Since we are to lower bound $\lambda_0(H)$, we can replace $\sum\Pi_{\inn}$ by $\tilde\Pi_{\inn}$, the projector onto the orthogonal complement of $\ker(\sum\Pi_{\inn})$:
		\begin{equation}
			H\geq \tilde H=\Delta\otimes\mathbbm{1}+ \ket{0}\bra{0}\otimes\tilde\Pi_{\inn} + \ket{T}\bra{T}\otimes\tilde\Pi_{\out}.
		\end{equation}
		We omit the tildes in the following.
		
		We can convert the pair of projectors $(\Pi_{\inn}, \Pi_{\out})$ into a simpler form using a unitary that only acts on the computation registers \cite{jordan1875essai}.
		The Hilbert space of the computation registers can be decomposed into mutually orthogonal one- and two-dimensional subspaces, such that on each space, one of the following five cases happens:
		\begin{itemize}
			\item $\Pi_{\inn}=\Pi_{\out}=0$;
			\item $\Pi_{\inn}=1$, $\Pi_{\out}=0$;
			\item $\Pi_{\inn}=0$, $\Pi_{\out}=1$;
			\item $\Pi_{\inn}=\Pi_{\out}=1$;
			\item \begin{equation}\label{eq:projpairsf}
				\Pi_{\inn}=\begin{pmatrix} 1,0\\0,0 \end{pmatrix},
				~~\Pi_{\out}=\begin{pmatrix} 1-\mu,&-\xi\\-\xi,&\mu \end{pmatrix}.
			\end{equation}
			where $(1-\mu)\in(0,1)$ and $\xi=\sqrt{\mu(1-\mu)}$.
		\end{itemize}
		The above decomposition allows us to block diagonalize $H$ into submatrices of size $T+1$ or $2(T+1)$. 
		Let us analyse the spectra of $H$ within each block.
		
		For case 1, $H=\Delta$, and $\lambda_0(H)=0$.
		
		For case 2, $H=\Delta+\ket{0}\bra{0}$. 
		It is not hard to diagonalize it and show $\lambda_0(H)=2-2\cos(\frac{\pi}{2L+3})$. 
		Case 3 is similar.
		
		For case 4, $H=\Delta+\ket{0}\bra{0}+\ket{T}\bra{T}$.
		It is not hard to diagonalize it and show $\lambda_0(H)=2-2\cos(\frac{\pi}{L+2})$.
		
		For case 5, using Eq.(\ref{eq:projpairsf}) and switching the first and last indices of the second block, we get (take $T=3$ for an example):
		\begin{equation}\label{eq:FKbigmat}
			H=\left(
			\begin{array}{cccccccc}
				2 & -1 & 0 & 0 & 0 & 0 & 0 & 0 \\
				-1 & 2 & -1 & 0 & 0 & 0 & 0 & 0 \\
				0 & -1 & 2 & -1 & 0 & 0 & 0 & 0 \\
				0 & 0 & -1 & 2-\mu & -\xi & 0 & 0 & 0 \\
				0 & 0 & 0 & -\xi & 1+\mu  & -1 & 0 & 0 \\
				0 & 0 & 0 & 0 & -1 & 2 & -1 & 0 \\
				0 & 0 & 0 & 0 & 0 & -1 & 2 & -1 \\
				0 & 0 & 0 & 0 & 0 & 0 & -1 & 1 \\
			\end{array}
			\right).
		\end{equation}
		This Hamiltonian can also be exactly solved. 
		It is a quantum walk on $2T+2$ sites with an ``impurity" in the middle.
		The eigenvectors $\ket{\psi}$ satisfy the following ansatz (we index the size by $-T-1,\cdots T$):
		\begin{equation}
			\psi_i=\begin{cases}
				a_1 e^{i k x}+a_2 e^{-i k x}~~&\text{if $i<0$}\\
				b_1 e^{i k x}+b_2 e^{-i k x}~~&\text{if $i\geq 0$}\\
			\end{cases},
		\end{equation}
		where $a_i$, $b_i$ and $k$ need to be determined. 
		The corresponding eigenvalues are $E(k)=2-2\cos(k)$.
		
		Using this ansatz, $H\ket\psi=E\ket\psi$ reduces to four equations, two from the boundaries and two from the middle:
		\begin{equation}
			\left(
			\begin{array}{cccc}
				1 & e^{2 i k (T+2)} & 0 & 0 \\
				e^{i k}-\mu & e^{i k}(1-e^{i k})+e^{2 i k}(1-\mu)  & -e^{i
					k} \xi  & -e^{i k} \xi  \\
				\xi  & e^{2 i k} \xi  & -1+e^{i k} \epsilon  & e^{i k} \epsilon -e^{2 i k}
				\\
				0 & 0 & e^{2 i k (T+1)}(1-e^{i k}) & -e^{i k}(1-e^{i k}) \\
			\end{array}\right)
			\left(
			\begin{array}{c}
				a_1\\a_2\\b_1\\b_2
			\end{array}
			\right)=0.
		\end{equation}
		In order for $\ket{\psi}$ to exist, the determinate must vanish.
		Straightforward calculation shows
		\begin{equation}
			\det=z^2(1-z)^2 \left(\left(z^{2 T+3}+1\right)^2-(1-\mu) (z+1)^2  z^{2 T+2}\right)=z^2(1-z)^2p_+(z)p_-(z),
		\end{equation}
		where $z=e^{ik}$, $p_\pm(z)=z^{2T+3}+1\pm\sqrt{1-\mu} z^{T+1}(z+1)$.
		
		Although $z=1$ ($k=0$) permits non-zero $(a_1,a_2,b_1,b_2)$, the corresponding $\psi_i$ would be all zero, hence $z=1$ is not a solution. 
		We are left with analysing the factor $p_\pm(z)$. 
		Simple calculation shows
		\begin{equation}
			p_\pm(z)
			=2 e^{i k \left(T+\frac{3}{2}\right)} 
			\left(
			\cos ((T+\frac{3}{2})k)\pm\sqrt{1-\mu} \cos(\frac{1}{2}k)
			\right).
		\end{equation}
		
		It is not hard to show that it has $(T+1)$ solutions on $k\in(0,\pi)$ and $k\in(\pi,2\pi)$ respectively.
		Together with the solution $k=\pi$, all $(2T+3)$ solutions satisfy $k\in\mathbb{R}$ (in other words, all $(2T+3)$ roots of $p_\pm(z)$ are on the unit circle if $0<\mu<1$).
		Moreover, the smallest solution is achieved by $p_-(z)$, and $k_0=\Theta(\frac{\sqrt{1-\sqrt{1-\mu}} }{T})=\Theta(\frac{\sqrt{\mu} }{T})$.

		Therefore, for case 5, $\lambda_0(H)=2-2\cos(k_0)=\Theta(\frac{\mu}{T^2})$.

		Now that we have bounded $\lambda_0(H)$ in each block, let us analyse $\lambda_0(H)$ of the total Hamiltonian.
		If $\epsilon=1$, then the theorem simply says $\lambda_0(H)\geq 0$ and is true.
		If $\epsilon<1$, then subspaces of the first type do not exist.
		For type 2, 3, 4, $\lambda_0(H)=\Omega(\frac{1}{T^2})$ regardless of $\epsilon$.
		For type 5, pick $\ket{\phi}=\begin{pmatrix} 0\\1 \end{pmatrix}$ (kernel of $\Pi_{\inn}$ in Eq.(\ref{eq:projpairsf})), we find	
		\begin{equation}
			\epsilon\geq \norm{(1-\Pi_{\out})U_T\cdots U_1\ket{\phi}}^2=1-\mu.
		\end{equation}	
		Therefore, we always have $\lambda_0(H)=\Omega(\frac{1-\epsilon}{T^2})$.
	\end{proof}

	\section{Construction of the Hamiltonian}\label{appendix: hamiltonian encoding}
	
	The encoding of the three-stage QTM into a Hamiltonian can be done as in \cite{BCW}. 
	Here, we describe the key ingredients concisely and refer readers to \cite[Appendix B-F]{BCW} for extended discussions.

	\subsection{Computational Hamiltonian}
	The goal of this subsection is to construct a Hamiltonian that encodes the computation of the QTM via a history state construction. 
	It is mainly based on \cite{GI09} and \cite{CPW,CPWfull}.

	\textbf{Initialization and non-halting penalty}
	
	First, we comment on how to construct penalty terms that energetically favour the three-stage QTM to have been correctly initialized and to have halted. 
	\begin{lemma}\label{circuit}
		There exists a quantum circuit that can be wrapped around the three-stage QTM such that if the three-stage QTM halts with probability $\eta$ on $\ket{1}^{\otimes L}$, then for an initial state 
		\begin{equation}
			\ket{\psi_0} = \ket{0}_{\text{\upshape{anc}}}\left( \alpha \ket{1}^{\otimes L} + \sqrt{1-\alpha^2}\ket{\phi} \right)
		\end{equation}
		where $\ket{\phi} \perp \ket{1}^{\otimes L}$, the final state of the computation $\ket{\psi_T}$ satisfies
		\begin{equation}\label{eq:lemma6ineq}
			1-\frac{(1+\alpha\sqrt\eta)^2}{4}
			\leq
			\bra{\psi_T} (\ket{1}_{\anc}\bra{1} \otimes \mathds{1}^{\otimes L}) \ket{\psi_T}	
			\leq 
			\frac{3}{4}\left|\alpha\sqrt{1-\eta}+\sqrt{1-\alpha^2}\right|^2.
		\end{equation}
	\end{lemma}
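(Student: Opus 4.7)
The plan is to build the wrapping circuit from two auxiliary qubits $B$ and $C$, a reversible initialization check, and a single controlled rotation; the analysis then reduces to a short algebraic calculation. First, before running the QTM, reversibly compute into $B$ the indicator that the input register equals $\ket{1}^{\otimes L}$, via a cascade of Toffoli gates controlled on the input bits. Second, run the three-stage QTM on the input register, which on the initialized branch produces $\sqrt{\eta}\ket{h_1} + \sqrt{1-\eta}\ket{h_0}$ with halt flag $1$ in $\ket{h_1}$ and $0$ in $\ket{h_0}$. Third, compute $C := B \wedge (\text{halt flag})$ with a single Toffoli. Fourth, apply a $\pi/3$ rotation on the output ancilla $A$ conditioned on $C=0$, mapping $\ket{0}_A \mapsto \tfrac{1}{2}\ket{0}_A + \tfrac{\sqrt{3}}{2}\ket{1}_A$ in the bad sector while leaving $\ket{0}_A$ untouched in the good sector.

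Next, decompose $\ket{\psi_0}$ as $\alpha\ket{1}^{\otimes L} + \sqrt{1-\alpha^2}\ket{\phi}$ and track three orthogonal branches through the circuit: the good branch (amplitude $\alpha\sqrt{\eta}$, ending in $B{=}C{=}1$), the halt-failure branch (amplitude $\alpha\sqrt{1-\eta}$, ending in $B{=}1$, $C{=}0$), and the init-failure branch (amplitude $\sqrt{1-\alpha^2}$, ending in $B{=}C{=}0$ because $B$ is computed from the input before the QTM runs). These sit in pairwise orthogonal $(B,C)$ sectors. The final rotation leaves the good branch untouched and boosts $\ket{1}_A$ with coefficient $\tfrac{\sqrt{3}}{2}$ in each bad branch, so by orthogonality
\[
\bra{\psi_T}(\ket{1}_{\anc}\bra{1}\otimes\mathds{1}^{\otimes L})\ket{\psi_T} \;=\; \tfrac{3}{4}\bigl[\alpha^2(1-\eta) + (1-\alpha^2)\bigr] \;=\; \tfrac{3}{4}(1-\alpha^2\eta).
\]

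Both claimed inequalities then reduce to elementary algebra. The upper bound follows from expanding $|\alpha\sqrt{1-\eta}+\sqrt{1-\alpha^2}|^2$ and dropping its non-negative cross term. The lower bound $\tfrac{3}{4}(1-\alpha^2\eta) \geq 1 - \tfrac{(1+\alpha\sqrt{\eta})^2}{4}$ rearranges to $\alpha^2\eta \leq \alpha\sqrt{\eta}$, which is immediate since $\alpha\sqrt{\eta}\in[0,1]$.

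The main obstacle is not the algebra but the bookkeeping: ensuring the four stages compose into a single reversible quantum Turing machine with only $O(1)$ extra tape, so the wrapped QTM fits into the Feynman--Kitaev clock construction of Appendix~\ref{app:FKgse}. This relies on the reversible, space-efficient QTM toolbox from \cite{CPWfull} already used to build the three-stage QTM; the Toffoli cascades and the final controlled rotation use only constant extra tape, and the $\pi/3$ single-qubit rotation is compiled into the fixed gate set via Solovay--Kitaev exactly as in Lemma~\ref{QTM family}.
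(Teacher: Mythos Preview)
Your argument is correct but takes a genuinely different route from the paper. The paper wraps the QTM with three rotations acting on a \emph{single} ancilla: an unconditional $R_{2\pi/3}$, then $R_{-\pi/3}$ conditioned on the input equalling $\ket{1}^{\otimes L}$, then the QTM evolution $\mathcal{M}$, then $R_{-\pi/3}$ conditioned on the halt flag. Because the good, halt-fail and init-fail branches are then distinguished only by the QTM output states $\ket{\psi_\hh},\ket{\psi_\nh},\ket{\phi_\hh},\ket{\phi_\nh}$, which need not be mutually orthogonal, the paper must apply the triangle inequality to the explicit final state (Eq.~\eqref{eq:lemma6decomp}) to obtain the two bounds in Eq.~\eqref{eq:lemma6ineq}. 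You instead record the init-check and its conjunction with the halt flag in fresh ancillas $B,C$, so the three branches live in pairwise orthogonal $(B,C)$ sectors and a single controlled rotation yields the exact value $\tfrac{3}{4}(1-\alpha^2\eta)$, from which both inequalities follow by the elementary algebra you give. Your version is cleaner analytically (an equality rather than two estimates); the paper's version is minimal in the number of ancillas that must be trusted to start in $\ket{0}$. In your construction $B$ and $C$ must also be pinned to $\ket{0}$---otherwise the Toffolis compute XORs rather than the intended indicators and the orthogonality-by-sector argument collapses---so downstream $H_{\inn}$ becomes a sum of three local projectors rather than one. This is harmless, since Theorem~\ref{thm:tightgap} already allows an arbitrary $\sum\Pi_{\inn}$, but it is worth stating explicitly since the lemma as written names only a single ancilla in $\ket{\psi_0}$.
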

	That is, if we assume the ability to enforce a specific ancilla be in the state $\ket{0}$ at the start of the computation, then its state at the end being $\ket{0}$ indicates both the QTM halted ($\eta=1$) and the QTM was correctly initialized ($\alpha=1$) as $\ket{1}^{\otimes L}$. 
	\newcommand{\nh}{\text{\upshape{nh}}}
	\newcommand{\hh}{\text{\upshape{h}}}
	\begin{proof}
		The circuit-QTM combination is constructed as in \cite{BCW}: a $R_{\frac{2\pi}{3}}$ ancilla rotation, following by a $R_{-\frac{\pi}{3}}$ ancilla rotation conditioned on the initial state being $\ket{1}^{\otimes L}$, followed by the QTM evolution $\mathcal{M}$, followed by a $R_{-\frac{\pi}{3}}$ ancilla rotation conditioned on QTM halts.
		A simple calculation shows:
		\begin{equation}\label{eq:lemma6decomp}
			\begin{aligned}
				\ket{\psi_T}=
				&\frac{\sqrt 3}{2}\ket{1}_{\anc}\left(
				\alpha\sqrt{1-\eta}\ket{\psi_{\nh}}+\sqrt{1-\alpha^2}(\sqrt{\eta'}\ket{\phi_\hh}-\sqrt{1-\eta'}\ket{\phi_{\nh}})
				\right)\\
				+&\frac{1}{2}\ket{0}_{\anc}\left(
				\alpha\sqrt{\eta}\ket{\psi_\hh}+\mathcal{M}
				(\alpha \ket{1}^{\otimes L} + \sqrt{1-\alpha^2}\ket{\phi})
				\right)
			\end{aligned}
		\end{equation}
		Here, we have decomposed $\ket{1}$ and $\ket{\phi}$ according to whether a branch will halt or not:
		\begin{equation}
			\begin{aligned}
				\mathcal{M}\ket{1}^{\otimes L}&=\sqrt{\eta}\ket{\psi_\hh}+\sqrt{1-\eta}\ket{\psi_{\nh}},\\ \mathcal{M}\ket{\phi}&=\sqrt{\eta'}\ket{\phi_\hh}+\sqrt{1-\eta'}\ket{\phi_{\nh}}.
			\end{aligned}
		\end{equation}
		Eq. (\ref{eq:lemma6ineq}) then follows by applying the triangle inequality to each line of Eq. (\ref{eq:lemma6decomp}).
	\end{proof}

	\textbf{Encoding the QTM transition rules}

	The following lemma claims that the evolution of the circuit\&QTM combination can be encoded into a translationally-invariant nearest-neighbour Hamiltonian
	\begin{equation}\label{eq:lemmaQTM}
		H_{\QTM}(L,\phi) := \sum_{i=1}^{L-1} h_{(i,i+1)}(\phi).
	\end{equation} 
	\begin{lemma}[\cite{CPWfull,BCW}] \label{qtm to ham}
		We can explicitly construct a Hermitian operator $h \in \mathcal{B}(\mathbb{C}^d \otimes \mathbb{C}^d)$, 
		such that
		\begin{enumerate}
			\item $h\geq 0$,
			\item $d$ depends (at most polynomially) on the alphabet size and number of internal states of $\mathcal{M}$,
			\item $h = A + e^{i\pi \phi}B + e^{-i \pi \phi}B^\dagger$, where
			\begin{itemize}
				\item $A \in \mathcal{B}(\mathbb{C}^d \otimes \mathbb{C}^d)$ is Hermitian with coefficients in $\mathbb{Z} + \mathbb{Z}/\sqrt{2}$;
				\item $B \in \mathcal{B}(\mathbb{C}^d \otimes \mathbb{C}^d)$ with coefficients in $\mathbb{Z}$.
			\end{itemize}
		\end{enumerate}			
		Furthermore, on a qudit chain of length $L$ with local dimension $d$, the Hamiltonian in Eq. (\ref{eq:lemmaQTM})	has the following properties:
		\begin{enumerate}[resume]
			\item $H_{\QTM}(L,\phi)$ is frustration-free,
			\item the ground state subspace of $H_{\QTM}(L,\phi)|_{\mathcal{S}_{\text{\upshape{\text{br}}}}}$ is spanned by the history states encoding the evolution of the QTM: $\frac{1}{\sqrt{T+1}} \sum_{t=0}^T \ket{t} \ket{\psi_t}$.
			
			Here, the bracketed subspace is defined by $\mathcal{S}_{\text{\upshape{\text{br}}}}:= \ket{<} \otimes \mathcal{H}^{\otimes (L-2)} \otimes \ket{>}$ where $\ket{<}$ and $\ket{>}$ are special states in the local Hilbert space. 
		\end{enumerate}
		The duration of the computation $T = \Omega( \text{\upshape{poly}}(L)\xi^L)$ is independent of the computation details. Here $\xi$ is an integer that can be chosen at will as long as it is sufficiently large.
	\end{lemma}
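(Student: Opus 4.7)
The plan is to adapt the Gottesman--Irani clock construction \cite{GI09}, refined as in \cite{CPWfull}, so that a single $\phi$-dependent gate appears at a controlled place in the circuit. The QTM of Section \ref{sec-QTM} is already reversible, space-efficient, and built from a fixed gate set consisting of Pauli/CNOT/Hadamard plus one family of $\phi$-dependent phase gates $U_\phi$, all realised on a qudit alphabet of constant size. On a chain of length $L$ the clock and tape are interleaved along the same 1D lattice, so that a ``head'' symbol hops along the chain; each local transition of the head encodes one QTM step together with one tick of the clock. The local term $h$ is then a sum of constant-norm, constant-support projectors of the form $\tfrac12\lVert \ket{s}\ket{t} - U\ket{s'}\ket{t'}\rVert^2$, one per allowed transition, plus penalty projectors that forbid pairs of symbols that cannot appear adjacently in any legal clock configuration, plus a penalty enforcing that the ancilla carries $\ket{0}$ at the endpoint decorated by $\ket{>}$.

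For item (3), the only transition rule whose unitary is $\phi$-dependent is the one implementing $U_\phi$; collect this single $2\times 2$ block into a term of the form $e^{i\pi\phi}B+e^{-i\pi\phi}B^\dagger$ and put everything else into the Hermitian part $A$. Because $U_\phi$ is diagonal, $B$ has $\pm 1$ entries only, so its entries lie in $\mathbb{Z}$. Hadamard gates produce factors of $1/\sqrt{2}$, so the rest of the transition projectors, once expanded, have entries in $\mathbb{Z}+\mathbb{Z}/\sqrt{2}$; all other (legal/illegal configuration) projectors are already integer-valued in the computational basis. The local dimension $d$ absorbs the QTM alphabet, the finite set of internal states, and a few flag/head tokens, hence is polynomial in the alphabet and number of internal states. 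Positivity of $h$ is immediate as it is a sum of projectors.

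For items (4) and (5), frustration-freeness follows by exhibiting the history state $\frac{1}{\sqrt{T+1}}\sum_{t=0}^T\ket{t}\ket{\psi_t}$ as an exact zero-energy eigenstate: the propagation projectors vanish by definition, and the illegal-pair and boundary projectors vanish because the legal clock trajectory never realises a forbidden adjacency and starts/ends with the correct bracketing. For the converse direction---that the ground space restricted to $\mathcal{S}_{\text{br}}$ is \emph{exactly} spanned by history states---one uses the standard Clairvoyance argument of \cite{aharonov2009power,CPW,Watson19}: the bracket states $\ket{<}$ and $\ket{>}$ pin the boundary, the illegal-pair penalties force every zero-energy computational basis configuration to be a legal clock snapshot, and then $H_{\text{prop}}$ acts as a path-graph Laplacian whose kernel on each orbit of the propagation is one-dimensional, given by the uniform superposition.

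The running time bound $T=\Omega(\mathrm{poly}(L)\,\xi^L)$ is inherited from the clock of \cite{CPW,CPWfull}: the head performs a base-$\xi$ odometer on an $L$-cell tape, yielding $\xi^L$ microsteps between halting checks, multiplied by the polynomial overhead of the reversible subroutines (rounding, comparison, Solovay--Kitaev gate synthesis); crucially $T$ depends only on $L$ and the fixed QTM, not on $\phi$, which is what makes the later asymptotic comparison of $\lambda_0(H_{\comp}(\phi))$ meaningful. The main obstacle, and the only place where one has to be careful rather than just cite prior work, is ensuring simultaneously that (i) all $\phi$-dependence is isolated into the single rank-one block $e^{i\pi\phi}B+e^{-i\pi\phi}B^\dagger$ and (ii) the remaining coefficients land in $\mathbb{Z}+\mathbb{Z}/\sqrt{2}$; this requires choosing the QTM gate set so that every non-$U_\phi$ gate is expressible with amplitudes in $\{0,\pm 1,\pm 1/\sqrt{2}\}$, and threading the $U_\phi$ gate through exactly one transition rule so that cross-terms from squaring the propagation projector do not spread the phase.
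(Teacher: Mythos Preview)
The paper does not prove this lemma at all; it is stated with attribution to \cite{CPWfull,BCW} and used as a black box. Your sketch is a reasonable outline of the construction in those references and is broadly correct as an expansion of the citation.

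One organizational slip: you include in $h$ a ``penalty enforcing that the ancilla carries $\ket{0}$ at the endpoint decorated by $\ket{>}$''. In the paper's decomposition, $H_{\QTM}$ contains only the propagation terms and the illegal-pair (syntactic) penalties; the input/output penalties $h^{\text{pen}}$ are added separately when forming $H_{\comp}$ in Theorem~\ref{L asympt}. This matters for item~(5) as stated: the ground space of $H_{\QTM}|_{\mathcal{S}_{\text{br}}}$ is spanned by \emph{all} history states, one for each initial computational state $\ket{\psi_0}$, not just the correctly initialized one. If you fold the ancilla-initialization penalty into $h$, the ground space shrinks and the later analysis (which maximizes over $\ket{\psi_0}\in\ker H_{\inn}$ via Theorem~\ref{thm:tightgap}) no longer matches the structure. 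Also, your expression $\tfrac12\lVert\ket{s}\ket{t}-U\ket{s'}\ket{t'}\rVert^2$ is a scalar, not an operator; you mean the rank-one projector $\tfrac12(\ket{s,t}-U\ket{s',t'})(\bra{s,t}-\bra{s',t'}U^\dagger)$.
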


	\textbf{The computational Hamiltonian}
	
	Now we add the penalty terms to the Hamiltonian:
	\begin{equation}
		H_{\text{\upshape{\text{comp}}}}(L,\phi) := H_{\QTM}(L,\phi)  + \sum_{i=1}^{L-1} h^{\text{\upshape{pen}}}_{(i,i+1)}.
	\end{equation} 
	Note that besides the initialization and non-halting penalty discussed above, denoted by $H_{\text{out}}$,
	the penalties also contain $H_{\text{in}}$, which tries to enforce the ancilla in lemma \ref{circuit} to initialise as $\ket{0}_{\text{\upshape{anc}}}$.
	
	\begin{theorem}\label{L asympt}
		We can explicitly construct a Hermitian operator $h^{\text{\upshape{pen}}}\in \mathcal{B}(\mathbb{C}^{d}\otimes \mathbb{C}^{d})$, for $d$ from lemma \ref{qtm to ham}, such that
		\begin{enumerate}
			\item $h^{\text{\upshape{pen}}} \geq 0$,
			\item $h^{\text{\upshape{pen}}}$ is diagonal with terms in $\mathbb{Z}$.
		\end{enumerate}
		Furthermore, on a spin chain of length $L$ with local dimension $d$, the Hamiltonian $H_{\text{\upshape{\text{comp}}}}$	has the following properties:
		\begin{enumerate}
			\item if no $\tilde{\phi}_m$ satisfies $0 < \tilde{\phi}_m < \Omega_m\upharpoonright m$, then  
			\begin{equation}\label{eq:L asympt1}
				\lambda_0 (H_{\comp}(L,\phi) ) = \Omega\left(\frac{1-2^{-(n-m)}-\delta(n)}{T^2}\right);
			\end{equation}
			\item if both $\tilde{\phi}_m$ satisfy $0 < \tilde{\phi}_m < \Omega_m \upharpoonright m$, then
			\begin{equation}
				\lambda_0 (H_{\comp} (L,\phi)) = O\left(\frac{2^{-(n-m)}+\delta(n)}{T^2}\right);
			\end{equation}
			\item otherwise
			\begin{equation}
				\lambda_0 (H_{\comp} (L,\phi)) \geq 0.
			\end{equation}
		\end{enumerate}
		Here $\tilde{\phi}_m \in \mathcal{I}_m(\phi)$ for $\mathcal{I}_m(\phi)$ as in Eq. (\ref{eq:defIm}), $n=L-5$, $m\leq n$, and $\delta(n)$ is as in lemma \ref{QTM family}. 
	\end{theorem}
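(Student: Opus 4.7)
The plan is to define $h^{\text{pen}}$ as the sum of two diagonal 2-local operators: an initialization penalty enforcing that the ancilla of Lemma \ref{circuit} sits in $\ket{0}_{\anc}$ at clock time $t=0$, and a non-halting penalty projecting onto $\ket{1}_{\anc}$ at $t=T$. Using the boundary symbols $\ket{<},\ket{>}$ supplied by Lemma \ref{qtm to ham}, these penalties can be realized as 2-body diagonal terms with entries in $\{0,1\}$, giving properties (1)--(2) immediately. The strategy is then to invoke Theorem \ref{thm:tightgap} to convert the acceptance probability of the wrapped computation into a ground state energy, and to propagate the QPE accuracy bound of Lemma \ref{QTM family} through Theorem \ref{thm: W'transition} to control this acceptance probability in each of the three cases.

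For the energy analysis I will argue that within the bracket subspace $\mathcal{S}_{\text{br}}$ of Lemma \ref{qtm to ham}, $H_{\comp}$ is exactly of the Feynman--Kitaev standard form $H_{\prop}+H_{\inn}+H_{\out}$ covered by Theorem \ref{thm:tightgap}, so that
\begin{equation}
\lambda_0\bigl(H_{\comp}(L,\phi)|_{\mathcal{S}_{\text{br}}}\bigr)=\Theta\!\left(\frac{1-\epsilon}{T^2}\right),
\end{equation}
where $\epsilon$ is the optimal output-acceptance probability over correctly-initialized inputs. The standard Gottesman--Irani machinery then lifts this to the full Hilbert space, because any state violating the bracket structure receives $\Omega(1)$ energy from $H_{\QTM}$ and therefore cannot beat history states in the regime where the bracket ground-state energy is $O(1/T^2)$.

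Next I will convert $\epsilon$ into a statement about the halting probability $\eta$ of the three-stage QTM. For initial states $\ket{0}_{\anc}(\alpha\ket{1}^{\otimes L}+\sqrt{1-\alpha^2}\ket{\phi})$, Lemma \ref{circuit} gives the two-sided bound; choosing $\alpha=1$ yields $1-\epsilon\leq \tfrac{3}{4}(1-\eta)$, while for any $\alpha\in[0,1]$ we retain $1-\epsilon\geq 1-\tfrac{(1+\sqrt{\eta})^2}{4}=\Omega(1-\sqrt{\eta})$ as $\eta\to 0$. Combining with Lemma \ref{QTM family} (the QPE+rounding output has $\mathcal{P}_{\mathcal{I}_m(\phi)}$-weight at least $1-2^{-(n-m)}-\delta(n)$) and the two halting branches of Theorem \ref{thm: W'transition}, case 2 of the theorem gives $\eta\geq 1-2^{-(n-m)}-\delta(n)$ and hence the claimed $O\!\left((2^{-(n-m)}+\delta(n))/T^2\right)$ upper bound, while case 1 gives $\eta\leq 2^{-(n-m)}+\delta(n)$ (halting can only come from the poor-approximation tail) and hence the matching $\Omega\!\left((1-2^{-(n-m)}-\delta(n))/T^2\right)$ lower bound. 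Case 3 (exactly one of the two best approximations halts) is the trivial one: since $H_{\QTM}$ is frustration-free and $h^{\text{pen}}\geq 0$, we automatically have $\lambda_0(H_{\comp})\geq 0$.

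The main obstacle is the outside-the-bracket bookkeeping: one must verify that states with ill-formed bracket structure (or with clock/tape patterns violating the local rules that define $H_{\QTM}$) cannot achieve energy below the bracket-restricted minimum in the relevant asymptotic regime. I will handle this by leaning on the local legality penalties of \cite{GI09,CPWfull}, where any illegal pattern is witnessed by a constant-energy violation, making illegal states uniformly subdominant to history states once $1/T^2\ll 1$. A secondary technical point is the $\alpha$-optimization in Lemma \ref{circuit} for case 1: the bound degrades like $\sqrt{\eta}$ rather than $\eta$, but since $\eta=O(2^{-(n-m)}+\delta(n))$ is already small, this still produces a numerator $\Omega(1-2^{-(n-m)}-\delta(n))$ after absorbing constants, matching the statement.
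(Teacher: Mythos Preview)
Your proposal is correct and follows essentially the same route as the paper: invoke Theorem~\ref{thm:tightgap} to get $\lambda_0=\Theta((1-\epsilon)/T^2)$, use the two-sided bound of Lemma~\ref{circuit} to translate $1-\epsilon$ into the halting probability $\eta$ (with $1-\epsilon\le\tfrac34(1-\eta)$ at $\alpha=1$ and $1-\epsilon\ge 1-\tfrac{(1+\sqrt\eta)^2}{4}$ in general), and feed in the QPE accuracy bound from Lemma~\ref{QTM family} to handle cases~1 and~2, with case~3 following from positive semidefiniteness. Your extra commentary on the bracket subspace versus illegal states and on the $\sqrt\eta$ degradation is accurate elaboration of points the paper leaves to the standard-form machinery of \cite{GI09,CPWfull,Watson19}.
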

	\begin{proof}
		\cite{BCW} constructs $h^{\text{\upshape{pen}}}$ that satisfies points 1 and 2.
		
		By theorem \ref{thm:tightgap} we have $\lambda_0(H)=\Omega(\frac{1-\epsilon}{T^2})$ where
		\begin{equation}
			\epsilon = \max_{\ket{\psi}\in\ker(\sum\Pi_{\inn}),\ket{\eta}\in\ker(\Pi_{\out})} |\bra{\eta}U\ket{\psi}|^2.
		\end{equation}
		Here, $\Pi_{\out}$ is rank 1, 
		hence $1-\epsilon$ is exactly the minimal output penalty subjected to the constraint that  $\ket{\psi_0}\in\ker(H_{\inn})$, which is given by lemma $\ref{circuit}$.
		Therefore,
		\begin{equation}
			1-\epsilon\geq \min_{\ket{\psi_0}}\left(1-\frac{(1+\alpha\sqrt\eta)^2}{4}\right)\geq 1-\frac{(1+\sqrt{\eta})^2}{4},
		\end{equation}
		and
		\begin{equation}
			1-\epsilon\leq \min_{\ket{\psi_0}} \frac{3}{4}|\alpha\sqrt{1-\eta}+\sqrt{1-\alpha^2}|^2\leq\frac{3}{4}(1-\eta).
		\end{equation}

		For case 1, by lemma \ref{QTM family} we have $\eta \leq 2^{-(n-m)}+\delta(n)$, hence Eq. (\ref{eq:L asympt1}) follows.
		
		For case 2, by lemma \ref{QTM family} we have $\eta\geq 1-2^{-(n-m)}-\delta(n)$. 
		The claim then follows from lemma \ref{gse upper bound}.

		For case 3 (if there are two $\tilde{\phi}_m \in \mathcal{I}_m(\phi)$ for the given $m$ and exactly one is less than $\Omega_m \upharpoonright m$), the bound comes from the positive semi-definiteness of $H_{\comp}$. 
		This bound will not be used later.
	\end{proof}

	\subsection{Tiling\&Marker Hamiltonian}

	The separation in ground state energies for $H_\text{comp}$ vanishes as $L \rightarrow \infty$. 
	However, as discussed in the main text, we can couple $H_\text{comp}$ to a tiling\&marker Hamiltonian to create a separation that persists as $L \rightarrow \infty$.

	\textbf{Tiling Hamiltonian}
	
	Based on the idea of Wang tilings we can design a classical Hamiltonian, which we call a ``checkerboard Hamiltonian", such that the zero energy ground states correspond to a periodic partition of the lattice into squares of equal and unfixed size. 
	Wang tiles can also be used to design another classical Hamiltonian, which we call a ``TM Hamiltonian", such that the zero energy ground states encode the computation of any desired classical Turing machine as a tiling.
	We can then couple the checkerboard Hamiltonian and a TM Hamiltonian encoding the computation of $x\mapsto \ceil{x^{1/8}}$, so that in the ground states a marker $\bigstar$ will be placed at $\ceil{s^{1/8}}$ on the upper boundary of each square of size $s$.
	We denote the combined Hamiltonian as $H_{\tile}$.
	
	\textbf{Marker Hamiltonian}
	
	The marker Hamiltonian \cite{Bausch_2020} is a family of 1D translationally-invariant Hamiltonians on a qudit chain.
	Certain basis states of the local Hilbert space are deemed special and called \textit{marks} (in our case, they are $\ket{\blacksquare}$ and $\ket{\bigstar}$). 
	The Hamiltonian, denoted $H_{\text{mark}}$, is designed to never move these special states. 
	Furthermore, its ground state energy depends on the distance between consecutive $\blacksquare$ states and their relative positions with respect to the $\bigstar$ marks.
	In particular, we use a marker Hamiltonian $H_\mark$ that, when a $\bigstar$ is placed between two $\blacksquare$ markers---which will act as endpoint markers---the ground state energy is\footnote{In fact, using a similar method as in the solution of Eq. (\ref{eq:FKbigmat}), we can show 
		$\bra{\boxplus_s}_A  H^{(\boxplus ,f)}_s|_A \ket{\boxplus_s}_A =-\frac{1}{4^{C(s+\lceil s^{\sfrac{1}{8}} \rceil)}}(\frac{9}{4}+o(1))$.
	}	
	\begin{equation}
		-\frac{1}{4^{-C(L+r)}}\leq\lambda_0(H_\mark)\leq -\frac{3}{4^{-C(L+r)}},
	\end{equation}
	where $C\in\mathbb{N}$ is a constant that we are free to choose, $L$ is the distance between two $\blacksquare$ marks, and $r$ is the distance between the $\bigstar$ mark and the left $\blacksquare$ mark.

	\textbf{Tiling\&Marker}
	
	We consider a two-layer system composed of a marker layer and a tiling layer. 
	In the marker layer, we extend each local Hilbert space $\mathcal{H}_{\text{mark}}$ by one dimension, and place a marker Hamiltonian on each row of the lattice.
	The local Hilbert space of the two-layer system is $\mathcal{H}_\boxplus=\mathcal{H}_{\tile} \otimes (\mathcal{H}_{\mark}\oplus\mathbb{C})$.
	The two layers are coupled with penalty terms such that in the ground state of the coupled tiling\&marker Hamiltonian $H_\boxplus$:
	\begin{itemize}
		\item states in the marker Hilbert space live on the upper edge of each square in the tiling layer; 
		\item $\ket{0}\in\mathbb{C}$ lives elsewhere; 
		\item one $\blacksquare$ in the marker layer matches with one of each of the top corners of each square in the tiling layer;
		\item the $\bigstar$ mark in the tiling layer matches with the $\bigstar$ mark placed at $\ceil{s^{1/8}}$ in the tiling layer.
	\end{itemize}

	\textbf{Tiling\&Marker coupled with computation}
	
	We add one more layer---a computational layer---to the tiling\&marker system.
	In this computational layer, we extend each local Hilbert space $\mathcal{H}_{\text{comp}}$ by one dimension, and place a 1D computational Hamiltonian of lemma \ref{L asympt} on each row of the lattice.  
	The local Hilbert space of the three-layer system is $\mathcal{H}_{\text{u}}=\mathcal{H}_\boxplus\otimes(\mathcal{H}_{\text{comp}}\oplus\tilde{\mathbb{C}})$.
	The computational layer is coupled to the other two layers with penalty terms such that in the ground state of the coupled Hamiltonian $H_{\text{u}}$, the computational layer satisfies:
	\begin{itemize}
		\item ground states of the computational Hamiltonian live on the upper edge of each checkerboard square; 
		\item $\ket{\tilde 0}\in\tilde{\mathbb{C}}$ lives everywhere else. 
	\end{itemize}
	The ground state energy will be the sum of energies of each checkerboard square, each of which equals $\lambda_0(H_{\comp})+\lambda_0(H_\mark)$ living on the upper edge.

	We summarize the construction in the following lemma.
	\begin{lemma}[\cite{BCW}] \label{2D marker}
		Let $C\in \mathbb{N}$ be a sufficiently large constant. 
		We can explicitly construct Hermitian operators $h^{\site} \in \mathcal{B}\left(\mathcal{H}_{\uu}\right)$, $h^{\row},h^{\col},p^{\row} \in \mathcal{B}\left(\mathcal{H}_{\uu} \otimes \mathcal{H}_{\uu} \right)$, 
		where $\mathcal{H}_\uu = \mathcal{H}_\boxplus \otimes (\mathcal{H}_{\text{comp}}\oplus \tilde{\mathbb{C}})\cong \mathbb{C}^{d'}$ 
		and $\mathcal{H}_\boxplus=\mathcal{H}_{\tile} \otimes (\mathcal{H}_{\mark}\oplus\mathbb{C})$ 
		as discussed above,
		such that
		\begin{enumerate}
			\item $h^{\site},h^{\row},h^{\col},p^{\row} \geq 0$;
			\item $h^{\site} \in \mathcal{B}(\mathbb{C}^{d'})$ is Hermitian with coefficients in $\mathbb{Z} + \mathbb{Z}/\sqrt{2}$;
			\item $h^{\row} = B_1 + e^{i\pi\phi}B_2 + e^{-i\pi\phi}B_2^\dagger$, where
			\begin{itemize}
				\item $B_1\in \mathcal{B}(\mathbb{C}^{d'}\otimes \mathbb{C}^{d'})$ is Hermitian with coefficients in $\mathbb{Z} + \mathbb{Z}/\sqrt{2}$;
				\item $B_2 \in \mathcal{B}(\mathbb{C}^{d'}\otimes \mathbb{C}^{d'})$ has coefficients in $\mathbb{Z}$;
			\end{itemize}
			\item $h^{\col}, p^{\row} \in \mathcal{B}(\mathbb{C}^{d'}\otimes \mathbb{C}^{d'})$ are diagonal with coefficients in $\mathbb{Z}$.
		\end{enumerate}
		Furthermore on an $L\times L$ spin lattice with local dimension $d'$, the translationally-invariant nearest-neighbour Hamiltonian
		\begin{equation}
			H_{\uu}(\phi):= \sum_{i=1}^L \sum_{j=1}^L h^{\site}_{(i,j)}(\phi) + \sum_{i=1}^L \sum_{j=1}^{L-1} h^{\col}_{(i,j),(i,j+1)} + \sum_{i=1}^{L-1} \sum_{j=1}^{L} \left( h^{\row}_{(i,j),(i+1,j)}(\phi) + p^{\row}_{(i,j),(i+1,j)} \right)
		\end{equation}
		has the following properties:
		\begin{enumerate}
			\item $H_{\uu}$ block-decomposes as $(\oplus_s H_s) \oplus B$, where each $s$ is a possible square size of a valid periodic tiling of the lattice, and $B$ corresponds to all other tiling configurations.  
			\item $B \geq 0$;
			\item There exists an eigenbasis of $H_s$ consisting of states that are product states across squares in the tiling.
			\item Within a single square $A$ of side length $s$ within a block $H_s$, 
			the ground state is of the form $\ket{\boxplus_s}_A \otimes \ket{r_0}\otimes \ket{r}$, such that
			\begin{enumerate}
				\item $\ket{\boxplus_s}$ is the ground state of $H_\boxplus$, and
				\begin{equation}\label{eq:markerE}
					-\frac{3}{4^{C(s+\lceil s^{\sfrac{1}{8}} \rceil)}} \leq \bra{\boxplus_s}_A  H_\boxplus|_A \ket{\boxplus_s}_A \leq -\frac{1}{4^{C(s+\lceil s^{\sfrac{1}{8}} \rceil)}};
				\end{equation} 
				\item $\ket{r_0}$ is the ground state of $H_{\comp}$;
				\item $\ket{r}=\ket{\tilde{0}}^{\otimes (s \times (s-1))}$.
			\end{enumerate}
		\end{enumerate}
	\end{lemma}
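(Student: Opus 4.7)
The plan is to assemble the three layers---tiling, marker, and computational---then couple them with diagonal penalty terms that enforce the desired ground-state structure on each square. First I would construct $H_\tile$ by combining a checkerboard Hamiltonian enforcing periodic partitions of the lattice into equal squares of unfixed side length $s$, with a TM Hamiltonian whose tiling ground states encode the computation $x\mapsto\lceil x^{1/8}\rceil$ run along the top edge of each square, placing a $\bigstar$ at position $\lceil s^{1/8}\rceil$. Both pieces are classical and diagonal with integer entries.

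Next I would extend the tiling local space by a one-dimensional summand to form $\mathcal{H}_\tile\otimes(\mathcal{H}_\mark\oplus\mathbb{C})$, place $H_\mark$ of \cite{Bausch_2020} on each row, and add diagonal coupling penalties forcing marker states to live on the upper edges of squares, with $\ket{0}\in\mathbb{C}$ elsewhere, $\blacksquare$ states at both top corners, and the $\bigstar$ mark in the marker layer aligned with the TM-produced $\bigstar$ of the tiling layer. This yields $H_\boxplus$ whose ground-state energy per square is bounded as in Eq.~(\ref{eq:markerE}) by the known one-dimensional marker estimates with $L=s$ and $r=\lceil s^{1/8}\rceil$. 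I would then repeat the extension on a third layer, adding $\tilde{\mathbb{C}}$, placing $H_\comp$ from Theorem \ref{L asympt} on each row, and coupling it via diagonal penalties so that $H_\comp$ is active on the upper edge of each square and $\ket{\tilde 0}$ lives elsewhere. All off-diagonal $\phi$-dependent contributions enter only through the QTM rows of Lemma \ref{qtm to ham}, whose structure $A+e^{i\pi\phi}B+e^{-i\pi\phi}B^\dagger$ with entries in $\mathbb{Z}+\mathbb{Z}/\sqrt{2}$ (respectively $\mathbb{Z}$) matches the claimed form of $h^{\row}$.

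For the block decomposition I would observe that all coupling penalties are diagonal in the tiling basis: configurations not corresponding to a valid periodic square tiling incur strictly positive energy and are absorbed into $B\geq 0$, while valid-$s$ configurations factorize across squares by translation invariance of the penalties, giving the direct sum $\bigoplus_s H_s$. Within a valid $H_s$ block, every term---$H_\boxplus$, $H_\comp$, and the diagonal couplings---acts independently on each square, so standard arguments produce an eigenbasis of product states across squares. Within a single square $A$ of side $s$, the unique ground state then factorizes as $\ket{\boxplus_s}_A\otimes\ket{r_0}\otimes\ket{r}$, where $\ket{\boxplus_s}$ is the marker-tiling ground state realizing Eq.~(\ref{eq:markerE}), $\ket{r_0}$ is the $H_\comp$ ground state on the top row of $A$, and $\ket{r}=\ket{\tilde 0}^{\otimes s(s-1)}$ on the remaining sites.

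The main obstacle will be calibrating the coupling penalties so that any partial mismatch---e.g.\ a marker state off the upper edge, a misaligned $\bigstar$, or a $H_\comp$ ground state displaced from the top row---costs strictly more energy than the exponentially small marker bonus $\sim 4^{-C(s+\lceil s^{1/8}\rceil)}$ on every square, without enlarging the local dimension or violating the entry restrictions in $\mathbb{Z}+\mathbb{Z}/\sqrt{2}$. This is where the construction of \cite{BCW} does most of the combinatorial work, and I would invoke those penalty designs essentially verbatim, verifying only that the $\phi$-dependent QTM rows can be inserted into their framework without disturbing the block structure or the integer/half-integer entry conditions.
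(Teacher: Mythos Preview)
Your proposal is correct and follows essentially the same approach as the paper: the paper does not prove this lemma in situ but imports it from \cite{BCW}, preceding the statement with an informal description of the three-layer (tiling, marker, computational) construction and their couplings that matches your outline point-for-point. Your sketch of the block decomposition via diagonal tiling penalties, the per-square factorization, and the explicit deferral to \cite{BCW} for the penalty calibration is exactly the level of detail the paper itself provides.
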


	\subsection{Constant Spectral Gap}
	\textbf{Fine-tuning the parameters}
	
	The following lemma shows that if we set the QPE parameter $m$ to so that $\lim_{s\to\infty}m=\infty$ monotonically and
	\begin{equation}\label{eq:setm}
		m<n-(n^{1/4}-2\log_2 n),
	\end{equation}
	where $n=s-5$ and $s$ is the size of the lattice $H_{\comp}$ is running on\footnote{In the Hamiltonian from lemma \ref{2D marker}, $s$ no longer need be directly related to the size $L$ of the overall lattice}, then the constant $C$ in lemma \ref{2D marker} can be chosen such that the energy contribution in lemma \ref{2D marker}(1.d) asymptotically lies between the ground state energy bounds for $H_{\comp}$.

	\begin{lemma}\label{f(L) lemma}
		If Eq. (\ref{eq:setm}) holds, then there exists a constant $C\in\mathbb{N}$ such that
		\begin{equation}
			\omega\left(\frac{2^{-(n-m)}+\delta(n)}{T^2}\right) 
			\leq
			\frac{1}{4^{C(s+\lceil s^{1/8} \rceil)}}
			\leq o\left(\frac{1-2^{-(n-m)}-\delta(n)}{T^2}\right)
		\end{equation}
		for $\delta(n)$ from lemma \ref{QTM family}.
	\end{lemma}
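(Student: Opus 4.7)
The plan is to reduce both inequalities to comparisons of exponential rates in $s$, bound the error factor $2^{-(n-m)} + \delta(n)$ by a stretched exponential $2^{-\Theta(n^{1/4})}$, and then choose the integer $\xi$ (free in Lemma \ref{qtm to ham}) and the constant $C$ in such a way that the dominant exponential factors $4^{Cs}$ and $T^2 \sim \xi^{2s}$ cancel, leaving only the $s^{1/8}$ correction from the marker construction to handle both sides.

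First, I would simplify the numerators. From the hypothesis $m < n - (n^{1/4} - 2\log_2 n)$, one has $2^{-(n-m)} < n^2 \cdot 2^{-n^{1/4}}$. From Lemma \ref{QTM family}, $\delta(n) < (n^2/2) \cdot 2^{-c_2 n^{1/c_1}}$, and since $3 < c_1 < 4$ gives $1/c_1 > 1/4$, one has $n^{1/c_1} = \omega(n^{1/4})$, so asymptotically both error terms are absorbed into $O(n^2 \cdot 2^{-n^{1/4}})$. In particular $1 - 2^{-(n-m)} - \delta(n) \to 1$. Next, by Lemma \ref{qtm to ham}, $T = \Theta(\mathrm{poly}(s)\, \xi^s)$, so $1/T^2 = \Theta(\mathrm{poly}(s)^{-1}\, \xi^{-2s})$. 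The crux of the proof is to choose $\xi$ as a sufficiently large power of $2$, say $\xi = 2^{C_0}$, and set $C := C_0$, so that $4^{Cs} = \xi^{2s}$ exactly. The marker energy then factors as
\begin{equation}
    \frac{1}{4^{C(s+\lceil s^{1/8}\rceil)}} = \frac{1}{\xi^{2s}} \cdot \frac{1}{4^{C\lceil s^{1/8}\rceil}}.
\end{equation}

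With this choice, the right-hand inequality becomes, after dividing by $(1-2^{-(n-m)}-\delta(n))/T^2 \sim 1/T^2$, a ratio of order $\mathrm{poly}(s) \cdot 4^{-C\lceil s^{1/8}\rceil}$, which tends to $0$ super-polynomially and so realises the $o(\cdot)$ relation. The left-hand inequality becomes, after dividing $(2^{-(n-m)}+\delta(n))/T^2$ by the marker energy, a ratio of order $\mathrm{poly}(s) \cdot n^2 \cdot 2^{-n^{1/4}} \cdot 4^{C\lceil s^{1/8}\rceil} = \mathrm{poly}(s) \cdot 2^{-n^{1/4} + 2C\lceil s^{1/8}\rceil + O(\log n)}$. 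Since $n \sim s$ and $n^{1/4}/s^{1/8} \to \infty$, the exponent goes to $-\infty$, so the ratio tends to $0$, giving the required $\omega(\cdot)$ relation.

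The main obstacle is the exponent-matching in the choice of $C$: if $4^C$ were exponentially larger than $\xi^2$ the marker would be exponentially smaller than $1/T^2$ and the $\omega$-bound would fail; if $4^C$ were exponentially smaller the marker would exceed $1/T^2$ and the $o$-bound would fail. Arranging $\xi$ to be a power of $2$ removes this exponential mismatch, reducing the problem to the sub-exponential correction $4^{Cs^{1/8}}$. This is exactly where the $s^{1/8}$ term in the tiling\&marker construction is designed to pay off: it is large enough (super-polynomial in $s$) to dominate the inverse-polynomial slack in $T^2$, yet small enough to be swamped by the much faster stretched-exponential decay $2^{-n^{1/4}}$ of the halting-case error, so that a single constant $C$ simultaneously satisfies both asymptotic bounds.
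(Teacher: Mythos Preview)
Your proposal is correct and follows essentially the same route as the paper: bound both error terms by $O(n^2\,2^{-n^{1/4}})$, note that the non-halting numerator is $\Theta(1)$, choose $\xi$ a power of $2$ and set $C=\log_2\xi$ so that $4^{Cs}=\xi^{2s}$, and then let the $\lceil s^{1/8}\rceil$ correction furnish the required $o(\cdot)$ and $\omega(\cdot)$ slack on either side. The paper phrases the final step via an inequality on $\log$-scales rather than ratios, but the content is identical.
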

	\begin{proof}
		Since $m<n$ and $\delta(n)\to 0$ as $n\to\infty$, we have:
		\begin{equation}
			\frac{1-2^{-(n-m)}-\delta(n)}{T^2} 
			= \Theta\left(\frac{1}{T^2}\right).
		\end{equation}
		Moreover, Eq. (\ref{eq:setm}) implies that $2^{-(n-m)}=O(n^22^{-n^{1/4}})$.
		Since $c_1<4$ and $c_2\geq 1$, we also have $\delta(n)=O(n^22^{-n^{1/4}})$.
		Therefore:
		\begin{equation}
			\frac{2^{-(n-m)}+\delta(n)}{T^2}
			=O\left(\frac{n^22^{-n^{1/4}}}{T^2}\right).
		\end{equation}
		By construction, there exists a constants $\xi \in \mathbb{N}$ such that
		\begin{equation}
			\log T=s\log\xi+\Theta(\log s).
		\end{equation}
		We pick $C=\log_2\xi$, which can be chosen as an integer, then
		\begin{equation}
			2\log T+\omega(1)
			\leq
			\log4\cdot C(s+\lceil s^{1/8} \rceil) 
			\leq
			2\log T-\log(n^22^{-n^{1/4}})-\omega(1).
		\end{equation}
		This implies that the same $C$ satisfies
		\begin{equation}
			\omega\left(\frac{n^22^{-n^{1/4}}}{T^2}\right)
			\leq
			\frac{1}{4^{C(s+\lceil s^{1/8} \rceil)}}
			\leq				o\left(\frac{1}{T^2}\right), 
		\end{equation}
		and the claim follows.
	\end{proof}

	\textbf{Positive/negative ground state energy density separation}
	
	We take the Hamiltonian from lemma \ref{2D marker}, set $C$ to be the constant from lemma \ref{f(L) lemma} and take $m$ chosen according to Eq. (\ref{eq:setm}). 
	The following lemma claims a positive/negative ground state energy density separation.

	\begin{lemma}\label{block gse}
		There exists a constant computable integer $s'$, such that for a square of size $s$, the Hamiltonian $H_s$ in lemma \ref{block gse} satisfies:
		\begin{enumerate}
			\item if $s<s'$ then $H_s \geq 0$;
			\item if $s\geq s'$, then
			\begin{enumerate}
				\item if no $\tilde{\phi}_m$ satisfies $0< \tilde{\phi}_m < \Omega_m \upharpoonright m$ then $H_s \geq 0$;
				\item if both $\tilde{\phi}_m$ satisfy $0< \tilde{\phi}_m < \Omega_m \upharpoonright m$ then $\lambda_0 (H_s) < 0$;
				\item otherwise $\lambda_0 (H_s) \geq \frac{-3}{4^{C(s+\ceil{s^{1/8}})}}$.
			\end{enumerate}
		\end{enumerate}
		Here $\tilde{\phi}_m \in \mathcal{I}_m(\phi)$ for $\mathcal{I}_m(\phi)$ as in Eq. (\ref{eq:defIm}).
	\end{lemma}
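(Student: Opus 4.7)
The approach is to reduce $\lambda_0(H_s)$ to a per-square calculation using Lemma~\ref{2D marker}, bracket the per-square energy by combining Eq.~(\ref{eq:markerE}) with the three regimes of Theorem~\ref{L asympt}, and then invoke Lemma~\ref{f(L) lemma} to decide which contribution dominates.

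\textbf{Per-square reduction.} By Lemma~\ref{2D marker}(3,4), within the block $H_s$ the ground state factorises as a product across the squares of side $s$, and on each square $A$ the local factor has the form $\ket{\boxplus_s}_A \otimes \ket{r_0} \otimes \ket{r}$. The row-coupling penalties $p^{\row}$ vanish on configurations of this type, so the total ground-state energy splits as $\lambda_0(H_s) = N_s\,E_s(\phi)$, where $N_s$ is the number of squares and
\begin{equation*}
E_s(\phi) \;=\; \bra{\boxplus_s}_A H_\boxplus|_A \ket{\boxplus_s}_A \,+\, \lambda_0(H_\comp(s,\phi)).
\end{equation*}
Controlling the sign of $E_s(\phi)$ is therefore enough.

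\textbf{Large $s$.} For $s$ large I would combine Eq.~(\ref{eq:markerE}) with Theorem~\ref{L asympt} case by case. If no $\tilde{\phi}_m$ satisfies the halting condition, then $\lambda_0(H_\comp) = \Omega((1-2^{-(n-m)}-\delta(n))/T^2)$, and by Lemma~\ref{f(L) lemma} this exceeds $3/4^{C(s+\lceil s^{1/8}\rceil)}$, which upper-bounds $|\lambda_0(H_\boxplus|_A)|$; hence $E_s(\phi) \ge 0$. If both $\tilde{\phi}_m$ satisfy, then $\lambda_0(H_\comp) = O((2^{-(n-m)}+\delta(n))/T^2)$ is dominated by the lower bound $1/4^{C(s+\lceil s^{1/8}\rceil)}$ on $|\lambda_0(H_\boxplus|_A)|$, so $E_s(\phi) < 0$. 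In the mixed case, $\lambda_0(H_\comp) \ge 0$ combined with the marker lower bound yields $E_s(\phi) \ge -3/4^{C(s+\lceil s^{1/8}\rceil)}$. Define $s'$ as the smallest integer above which all three comparisons from Lemma~\ref{f(L) lemma} hold in the required direction; since those bounds are explicit, $s'$ is computable.

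\textbf{Small $s$.} For $s<s'$ the asymptotic content of Lemma~\ref{f(L) lemma} may fail, and a separate argument is needed to obtain $E_s(\phi) \ge 0$. The observation is that if $s$ is too small to support an effective run of the three-stage QTM, the maximal halting probability $\epsilon$ of Theorem~\ref{thm:tightgap} equals zero, so $\lambda_0(H_\comp) = \Theta(1/T^2) = \Theta(\xi^{-2s}/\text{poly}(s))$. The marker contribution is at most $3\xi^{-2(s+\lceil s^{1/8}\rceil)}$ in absolute value (using $C=\log_2 \xi$), so the ratio of the two is $\Theta(\xi^{2\lceil s^{1/8}\rceil}/\text{poly}(s))$, which grows unboundedly with $s$. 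By enlarging the base constant $\xi$ (and hence $C$) in the construction of Lemma~\ref{qtm to ham}, this ratio can be forced above $3$ for every $s$ in the finite window $1\le s< s'$ simultaneously, establishing $E_s(\phi) \ge 0$ and hence $H_s\ge 0$. Since Lemma~\ref{f(L) lemma} only requires $C$ to be sufficiently large, this enlargement is compatible with the large-$s$ analysis.

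\textbf{Main obstacle.} The hardest step is the small-$s$ regime: the asymptotic separation of Lemma~\ref{f(L) lemma} does not apply, the prefactor $1-2^{-(n-m)}-\delta(n)$ in Theorem~\ref{L asympt} may even be trivial, and one must instead extract a clean lower bound on $\lambda_0(H_\comp)$ directly from Theorem~\ref{thm:tightgap}. The delicate balancing of the polynomial prefactors in $T^2 = \text{poly}(s)\xi^{2s}$ against the subexponential factor $\xi^{2\lceil s^{1/8}\rceil}$ coming from the marker Hamiltonian is the key point; choosing $\xi$ large enough drives the latter to dominate uniformly in $s$.
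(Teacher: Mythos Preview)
Your per-square reduction and your treatment of the three large-$s$ cases are essentially the paper's argument: combine Theorem~\ref{L asympt} with Eq.~(\ref{eq:markerE}) and let Lemma~\ref{f(L) lemma} decide which term wins. That part is fine.

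The gap is in the small-$s$ regime. You argue that for $s<s'$ the three-stage QTM cannot ``effectively run'', so $\epsilon=0$ and $\lambda_0(H_\comp)=\Theta(1/T^2)$, which you then beat by enlarging $\xi$. But the statement $H_s\ge 0$ for $s<s'$ is asserted \emph{uniformly in $\phi$}, and your premise $\epsilon=0$ is false in general. There is nothing preventing the machine from halting well before the asymptotic separation of Lemma~\ref{f(L) lemma} kicks in: for suitable $\phi<\Omega$ (e.g.\ a dyadic $\phi$ on which QPE is exact and $W'$ halts at a small $m$), the halting probability can be essentially $1$ already for moderate $s$. In that case $\lambda_0(H_\comp)$ is $o(1/T^2)$ or even $0$, and no choice of $\xi$ can make it dominate the strictly negative marker contribution from Eq.~(\ref{eq:markerE}). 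Your ``enlarge $\xi$'' move only rescales both sides; it cannot rescue a vanishing numerator.

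The paper does not attempt any such comparison for small $s$. Instead it uses a construction freedom: the marker Hamiltonian can be \emph{designed} so that for $s<s'$ it is itself positive semidefinite. Since $H_\comp\ge 0$ unconditionally (each local term is PSD), this immediately gives $H_s\ge 0$ for $s<s'$, with no dependence on $\phi$ and no asymptotic balancing. This is the missing idea in your proposal.
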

	\begin{proof}
		We have
		\begin{equation}
			\lambda_0(H_s)=\lambda_0(H_{\comp})+\bra{\boxplus_s}_A  H_\boxplus|_A \ket{\boxplus_s}_A.
		\end{equation}
		
		Lemma \ref{f(L) lemma} implies that the second term lies between two cases from theorem \ref{L asympt} asymptotically.
		We define $s'$ to be the constant such that the separation holds for all $s\geq s'$.
		Since all implicit constants in relevant proofs are computable, so is $s'$.
		
		We can design the marker Hamiltonian such that if $s<s'$ then it is positive semidefinite. Consequently $H_s\geq 0$. 
		In the following, we assume $s>s'$. 
		
		(a) In this case, theorem \ref{L asympt}(1) implies $\lambda_0(H_{\comp})=\Omega(\frac{1-2^{-(n-m)}-\delta(n)}{T^2})$. Eq. (\ref{eq:markerE}) and lemma \ref{f(L) lemma} imply the 2nd term $=-o(\frac{1-2^{-(n-m)}-\delta(n)}{T^2})$.
		Therefore, $\lambda_0(H_s)> 0$.
		
		(b) In this case, theorem \ref{L asympt}(2) implies
		$\lambda_0 (H_{\comp}) = O(\frac{2^{-(n-m)}+\delta(n)}{T^2})$ while Eq. (\ref{eq:markerE}) and lemma \ref{f(L) lemma} imply the 2nd term $=-\omega(\frac{2^{-(n-m)}+\delta(n)}{T^2})$. 
		Therefore, $\lambda_0(H_s)<0$.
		
		(c) In this case, theorem \ref{L asympt}(3) implies $\lambda_0(H_{\comp})\geq 0$. Combined with  Eq. (\ref{eq:markerE}), we are done.	
	\end{proof}
	
	\textbf{Ground state energy of $H_{\uu}$}
	
	When taking the thermodynamic limit, this translates into a diverging separation in ground state energies because by lemma \ref{2D marker} the ground state energy is a sum of the ground state energy of each block.
	
	\begin{theorem}
		For the Hamiltonian in lemma \ref{block gse},
		\begin{equation}
			\lim_{L \rightarrow \infty} \lambda_0(H_{\uu}) \begin{cases}
				\rightarrow -\infty & \text{ for } \phi < \Omega\\
				\geq 0 & \text{ for } \phi \geq \Omega
			\end{cases}.
		\end{equation}
	\end{theorem}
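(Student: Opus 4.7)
The plan is to assemble the block decomposition of Lemma~\ref{2D marker} with the per-square sign characterisation of Lemma~\ref{block gse}. First I would note that Lemma~\ref{2D marker}(1) gives $H_{\uu} = \left(\bigoplus_s H_s\right) \oplus B$ with $B\geq 0$; Lemma~\ref{2D marker}(3) further tells us each $H_s$ admits a product eigenbasis across the $(L/s)^2$ size-$s$ squares of its tiling. Consequently, for each admissible block,
\begin{equation}
    \lambda_0(H_s) = (L/s)^{2}\,\lambda_0\!\bigl(H_s|_A\bigr),
\end{equation}
where $H_s|_A$ is the restriction to a single square and is exactly the operator whose sign Lemma~\ref{block gse} controls. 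Since $\lambda_0(H_{\uu}) = \min\bigl(\lambda_0(B),\,\min_s\lambda_0(H_s)\bigr)$ and $\lambda_0(B)\geq 0$, the whole question reduces to signing $\lambda_0(H_s|_A)$ in the two regimes.

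For $\phi\geq\Omega$, I would apply Theorem~\ref{thm: W'transition} to obtain that every $\tilde\phi_m \in \mathcal{I}_m(\phi)$ equals $0$ or exceeds $\Omega_m\upharpoonright m$ for every $m$. Thus the hypothesis ``no $\tilde\phi_m$ satisfies $0<\tilde\phi_m<\Omega_m\upharpoonright m$'' holds for every $s$, so Lemma~\ref{block gse}(2a) yields $H_s\geq 0$ for all $s\geq s'$, while the small sizes $s<s'$ are covered by Lemma~\ref{block gse}(1). Together with $B\geq 0$ this forces $H_{\uu}\geq 0$ for every $L$ and hence $\lambda_0(H_{\uu})\geq 0$ trivially in the limit.

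For $\phi<\Omega$ the strategy is to pick a single good tile size once and let it saturate arbitrarily large lattices. Theorem~\ref{thm: W'transition} supplies an $m_0$ beyond which both elements of $\mathcal{I}_m(\phi)$ sit strictly inside $(0,\Omega_m\upharpoonright m)$. Since the fine-tuning Eq.~\eqref{eq:setm} forces $m(s)\to\infty$ monotonically, a computable $s_0\geq s'$ with $m(s_0)\geq m_0$ exists, and Lemma~\ref{block gse}(2b) guarantees $\eta := -\lambda_0(H_{s_0}|_A) > 0$. Taking $L$ along multiples of $s_0$, the block $H_{s_0}$ is admissible and produces
\begin{equation}
    \lambda_0(H_{\uu}(L)) \leq \lambda_0(H_{s_0}) = -(L/s_0)^{2}\,\eta \longrightarrow -\infty.
\end{equation}
For general $L$ I would use a bounded boundary strip (whose contribution to $B$ remains non-negative and at most linear in $L$), which cannot prevent the $-\Theta(L^2)$ divergence.

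The hard part is minor bookkeeping rather than a new idea: verifying that Theorem~\ref{thm: W'transition}'s ``sufficiently large $m$'' translates to an explicit $s_0$ via the implicit map $s\mapsto m(s)$ of Eq.~\eqref{eq:setm}, and confirming that for every sufficiently large $L$ the tiling decomposition admits a configuration whose energy density is negative. Both are already built into the flexibility of the BCW tiling\&marker construction, so no new technique is needed beyond what Lemmas~\ref{2D marker} and \ref{block gse} deliver.
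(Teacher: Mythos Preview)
Your proposal is correct and follows essentially the same approach as the paper: both invoke the block decomposition of Lemma~\ref{2D marker} together with Theorem~\ref{thm: W'transition} to feed Lemma~\ref{block gse} the right sign information, then use the tensor/product structure over squares to sum the per-square energy. Your treatment is in fact slightly more explicit than the paper's in two respects---you spell out the factor $(L/s)^2$ coming from the product eigenbasis, and you acknowledge the need to handle $L$ that are not multiples of $s_0$---whereas the paper's proof simply writes $\sum_{\text{complete $s$-squares}}\lambda_0(H_s)$ and leaves the boundary bookkeeping implicit.
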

	\begin{proof}

		(1) If $\phi < \Omega$, then due to theorem \ref{thm: W'transition}, there exists an $m$ such that all $\tilde{\phi}_m\in \mathcal{I}_m(\phi)$ satisfy $0< \tilde{\phi}_m < \Omega_m$. 
		Then by lemma \ref{block gse}, $\lambda_0(H_s)<0$.
		Since a partition of the whole system into size-$s$ squares is a legal eigenstate, we have: 
		\begin{equation}
			\lim_{L\rightarrow \infty} \lambda_0(H_{\uu}) \leq \lim_{L\rightarrow \infty} \sum_{\text{complete $s$-squares}} \lambda_0(H_s) \to -\infty.
		\end{equation}
		Since the right hand side is already $-\infty$, the equation holds even if there are other $m$ giving rise to other eigenstates with smaller eigenvalues.

		(2) Now consider $\phi \geq \Omega$.
		Due to theorem \ref{thm: W'transition}, there can be no $s$ such that a $\tilde{\phi}_m\in \mathcal{I}_m(\phi)$ satisfies $0< \tilde{\phi}_m < \Omega_m$. 
		By lemma \ref{block gse} (1) and (2a), we know $\lambda_0(H_s)>0$ for any $s$.
		So $\lim_{L\rightarrow \infty} \lambda_0(H_{\uu}) \geq 0$.
	\end{proof}

	\textbf{Spectral gap}

			Using techniques from \cite{Bausch_2020}, the diverging difference in ground state energy can be converted into a difference in spectral gaps. 
			
			We now take a Hamiltonian $H_{\trivial}$ with local Hilbert space $\mathcal{H}_{\trivial}\cong\mathbb{C}^2$ that is diagonal in the computational basis, has $\ket{0}^{\otimes\Lambda}$ as its unique gapped ground state whose energy equals $-L$, and has a spectral gap of 1. 
			For example, $H_{\trivial}= -\sum_{i\in \Lambda}\ket{0}\bra{0}_i+\frac{1}{2}\sum_{\braket{i,j}} \mathds{1}^{(i)}\otimes \mathds{1}^{(j)}$ would suffice, where $\sum_{\braket{i,j}}$ means summing over neighbouring lattice sites $i$ and $j$. 
			
			We also take a Hamiltonian $H_{\dense}$ with local Hilbert space $\mathcal{H}_{\dense}$ with continuous spectrum in $[0,\infty)$ in the thermodynamic limit. For example, the Hamiltonian describing 1D XY chains placed on each row with no interaction between columns would suffice.
			
			Then, consider a local Hilbert space $\mathcal{H}_{\tot} := (\mathcal{H}_\uu\otimes \mathcal{H}_{\dense})\oplus \mathcal{H}_{\trivial}$. 
			We can define the following translationally-invariant, nearest-neighbour Hamiltonian on $\otimes_\Lambda \mathcal{H}_{\tot}$:
				\begin{equation}
					\begin{aligned}
						H_\tot(\phi) :=& \beta\left( (H_\uu(\phi) \otimes \mathds{1} + \mathds{1} \otimes H_{\dense})  \oplus 0 \right)
						+ (\left( 0\otimes 0 \right) \oplus H_{\trivial})\\
						+&\sum_{\braket{i,j}}((\mathds{1}_\uu^{(i)} \otimes \mathds{1}_{\dense}^{(i)})\oplus 0^{(i)})\otimes ((0^{(j)}\otimes 0^{(j)})\oplus \mathds{1}_{\trivial}^{(j)})+(i\leftrightarrow j).
					\end{aligned}
				\end{equation}
				Here $\beta\in\mathbb{Q}^+$.
				The second line ensures all states receive an energy penalty except those fully supported on $\otimes_\Lambda(\mathcal{H}_\uu\otimes \mathcal{H}_{\dense})\oplus (\otimes_\Lambda\mathcal{H}_{\trivial})$. 
				Therefore, 
				\begin{equation}
					\spec(H_\tot(\phi)) = \beta(\spec(H_{\uu}(\phi))+\spec(H_{\dense}))\cup \spec(H_{\trivial}) \cup  G,
				\end{equation}
				where $G \subset [1,\infty)$. 
				
				\begin{theorem}
					$H_\tot(\phi)$ constructed above satisfies theorem \ref{theorem 1}.
				\end{theorem}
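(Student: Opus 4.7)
The plan is to analyze the spectrum of $H_\tot(\phi)$ using its direct-sum structure,
\begin{equation*}
\spec(H_\tot(\phi)) = \beta\bigl(\spec(H_\uu(\phi)) + \spec(H_\dense)\bigr) \cup \spec(H_\trivial) \cup G,
\end{equation*}
with $G \subset [1,\infty)$, and to identify which sector contributes the ground state as a function of $\phi$. First I would read off the operators $h, h', a, a', b, c, c'$ by writing $H_\tot$ in the nearest-neighbour translationally-invariant form of Eq.~(\ref{hlambda}): the $\phi$-dependence resides entirely in the row term inherited via Lemma~\ref{2D marker} from $H_\uu(\phi)$, while $H_\dense$, $H_\trivial$ and the edge-penalty terms contribute $\phi$-independent pieces. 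Since every building block has matrix entries in $\mathbb{Z}+\tfrac{1}{\sqrt{2}}\mathbb{Z}$, so do the assembled matrices. The coupling $\beta$ in Theorem~\ref{theorem 1} matches the prefactor multiplying $H_\uu$ in $H_\tot$; one picks $\beta\in\mathbb{Q}^+$ small enough that each of $\|h^{\site}\|$, $\|h^{\col}\|$, $\|h^{\row}(\phi)\|$ is bounded by $2$ uniformly in $\phi$, which is possible since the unscaled operators have uniformly bounded norm and $\beta$ can be chosen arbitrarily small.

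For $\phi \bmod 1 \in (0, \Omega_M)$, the preceding theorem gives $\lambda_0(H_\uu(\phi)) \to -\infty$; more precisely, partitioning the lattice into size-$s$ squares each contributing a strictly negative energy (Lemma~\ref{block gse}) yields $\lambda_0(H_\uu) \leq -c\,L^2$ for some $c > 0$. This easily dominates $\lambda_0(H_\trivial) = -L$ for large $L$, so the ground state of $H_\tot$ lies in the sector $\mathcal{H}_\uu \otimes \mathcal{H}_\dense$. Taking $H_\dense$ to be decoupled 1D XY chains, its thermodynamic-limit continuous spectrum $[0,\infty)$ provides arbitrarily low-energy excitations above any ground state of $H_\uu$, establishing gaplessness per the main-text definition. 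The XY ground state supplies algebraic correlation decay (critical), and since $\mathcal{O}_i$ is defined to annihilate $\mathcal{H}_\uu \otimes \mathcal{H}_\dense$ by construction, one obtains $\braket{\mathcal{O}} = 0$ (disordered).

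For $\phi \bmod 1 \in [\Omega_M, 1]$, the same preceding theorem gives $\lambda_0(H_\uu(\phi)) \geq 0$, and $\spec(H_\dense) \subset [0,\infty)$, so the sector $\beta(\spec(H_\uu) + \spec(H_\dense))$ is non-negative; together with $G \subset [1,\infty)$, this leaves $\spec(H_\trivial)$ as the only sector reaching below $0$, with unique ground state $\ket{0}^{\otimes \Lambda}$ at energy $-L$ and gap $1$. The ground state of $H_\tot$ is therefore $\ket{0}^{\otimes \Lambda}$, which is a product state (trivial), gapped by $\gamma = 1$, and satisfies $\braket{\mathcal{O}} = 1$ (ordered). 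The main subtlety is ensuring that a single rational $\beta$ works uniformly in $\phi$ and $L$ for both the norm bounds and the phase separation; this is resolved because $\beta$ merely rescales the diverging negative contribution in the gapless case (preserving divergence to $-\infty$) and the non-negative $H_\uu$-sector in the gapped case (preserving non-negativity), without interacting with the $-L$ trivial-sector energy. Hence any sufficiently small rational $\beta$ satisfies all requirements simultaneously.
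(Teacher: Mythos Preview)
Your proposal is correct and follows essentially the same approach as the paper's proof: both arguments exploit the spectral decomposition $\spec(H_\tot)=\beta(\spec(H_\uu)+\spec(H_\dense))\cup\spec(H_\trivial)\cup G$ and compare the quadratic divergence $\lambda_0(H_\uu)\sim -cL^2$ against $\lambda_0(H_\trivial)=-L$ in the two regimes of~$\phi$. You in fact supply more detail than the paper's terse proof, spelling out the matrix-entry, norm-bound, and ordered/critical claims of Theorem~\ref{theorem 1} and correctly identifying the role of $\beta$ (the paper only remarks parenthetically that ``$\beta$ can be chosen arbitrarily small'').
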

				\begin{proof}
					If $\phi<\Omega$, then $\lambda_0(H_\uu(\phi))\rightarrow -\infty$ quadratically, which implies $\lambda_0(H_\tot(\phi))\rightarrow -\infty$. 
					The low energy physics is independent of $H_\trivial$.
					The ground state is gapless as guaranteed by $H_{\dense}$.
					
					If $\phi \geq \Omega$, then $\lambda_0(H_\uu(\phi))\geq 0$, which implies $\lambda_0(H_\tot(\phi)) = \lambda_0(H_\trivial(\phi))=-L$ and a spectral gap of 1, independent of $\beta$ (hence $\beta$ can be chosen arbitrarily small). The ground state is the gapped ground state of $H_{\trivial}$. 
				\end{proof}

\end{document}